\begin{document}

\title{Budget Feasible Mechanisms on Matroids}
\author{Stefano Leonardi \inst{1} \and Gianpiero Monaco \inst{2} \and Piotr Sankowski \inst{3} \and Qiang Zhang \inst{1}}
\institute{Sapienza University of Rome, Italy, \email{leonardi@dis.uniroma1.it}, \email{qzhang@gmail.com}
\and University of L'Aquila, Italy, \email{gianpiero.monaco@univaq.it} \and University of Warsaw, Poland, \email{sank@mimuw.edu.pl}}

\maketitle

\begin{abstract}
Motivated by many practical applications, in this paper we study {\em budget feasible mechanisms} where the
	goal is to procure independent sets from matroids. More
	specifically, we are given a matroid $\mathcal{M}=(E,\mathcal{I})$
	where each ground (indivisible) element is a selfish agent. The cost of each element (i.e., for selling the item or performing a service) is only known to the
	element itself. There is a buyer with a budget having additive valuations over the set of elements $E$. The goal is to design an incentive compatible (truthful) 
budget feasible mechanism which procures an independent set of the matroid under the given budget that yields the largest value possible to the buyer. Our result is a deterministic, polynomial-time, 
individually rational, truthful and budget feasible mechanism with
	$4$-approximation to the optimal independent set. Then, we extend our mechanism to the setting of matroid intersections in which the goal is to
	procure common independent sets from multiple matroids. We show that, given a polynomial time deterministic blackbox that returns $\alpha-$approximation solutions to the matroid intersection problem, there exists a deterministic, polynomial time, individually rational, truthful and budget feasible mechanism with $(3\alpha +1)-$approximation to the optimal common independent set.	
\end{abstract}


\section{Introduction}
Procurement auctions (a.k.a. reverse auctions), often carried out by governments or private companies, deal with the scenarios where a buyer would like to purchase objects from a set of sellers. These objects are not limited to physical items. For instance they can be services provided by sellers. 
In this work we consider the problem where a buyer with a budget is interested in a set of indivisible objects for which he has additive valuations. We assume that each object is a selfish agent. More specifically, we assume agents have quasi-linear utilities and they are rational (i.e.,  they aim to maximize the differences between the payments they receive and their true costs). We also restrict ourself to the case where the buyer is constrained to purchase a subset of objects that forms an independent set with respect to an underlying matroid structure. A wide variety of research studies have shown that matroids are linked to many interesting applications, for example, auctions~\cite{ausubel2004efficient,goel2015polyhedral,kleinberg2012matroid}, spectrum market~\cite{tse1998multiaccess}, scheduling matroids~\cite{demange1986multi} and house market~\cite{zhang2016house}. 

One challenge in such procurement auctions involves providing incentives to sellers for declaring their true costs when those costs are their {\em private information}. A classical mechanism, namely Vickrey-Clark-Groves (VCG)
mechanism~\cite{clarke1971multipart,groves1973incentives,vickrey1961counterspeculation}, provides an intuitive 
solution to this problem. The VCG mechanism returns a procurement
that maximizes the valuation of the buyer and the payments for sellers
 are their externalities to the procurement.
The VCG mechanism is a {\em truthful} mechanism, i.e., no seller will improve its utility by manipulating its cost regardless the costs declared by others.
However, the VCG mechanism also has its drawbacks. One of the
drawbacks, which makes VCG mechanism impractical, is that the payments
to sellers could be very high. To overcome this problem two different 
approaches have been proposed and
investigated. The first one is studying the {\em frugality} of
mechanisms~\cite{karlin2005beyond}, which studies the
minimum payment the buyer needs to pay for a set of objects when
sellers are rational utility maximizers. The other approach is developing  
{\em budget feasible mechanisms}~\cite{singer2010budget}, 
where the goal is to maximize the buyer's value for the procurement
under a given budget when
sellers are rational utility maximizers.
 Singer~\cite{singer2010budget} showed that budget
feasible mechanisms could approximate the optimal
procurement that ``magically" knows the costs of sellers, when the buyer's
valuation is nondecreasing submodular.

\noindent{\bf Our Results.}
The goal of this study is to design budget feasible mechanisms for
procuring objects that form an independent set in a given matroid structure. 
To the best of our knowledge it is the first time that matroid constraints are considered in the budget feasible mechanisms setting examined here. Previous work was mainly devoted to different
types of valuations for the buyer (see the Related Work subsection). 
Our results are positive. In Section~\ref{sec_Mechanisms for Matroids} we give a deterministic, polynomial time, 
individually rational, truthful and budget feasible
mechanism with $4$-approximation to the optimal
independent set (i.e., the independent set with maximum value for the buyer under the given budget) within the budget of the buyer  when the buyer has additive valuations. To generalize this result we also provide a
similar mechanism to procure the intersection of independent sets in multiple matroids.
In particular, given a deterministic polynomial time $\alpha$-approximation algorithm for the matroid intersection problems as a blackbox, in Section~\ref{sec_Mechanisms for matroid intersections} we present a deterministic, polynomial time, individually rational, truthful and budget feasible
 mechanism with $(3\alpha+1)$-approximation to the optimal
independent set within the budget of the buyer when the buyer has additive valuations. 
It is also good to know the limitations (e.g. lower bounds) of such
budget feasible mechanisms. In particular the lower bound to any
deterministic mechanism of $1+\sqrt{2}$ for additive valuations with one
buyer presented in~\cite{chen2011approximability} (it is worth
noticing that such lower bound do not rely on any computational or
complexity assumption), suggests that our mechanisms are not far
away from the optimal ones. 
Finally, budget feasible mechanisms also received a lot of attention  when the valuation functions are submodular~\cite{singer2010budget} and XOS~\cite{bei2012budget}. In Section~\ref{sec_XOS functions} we slightly improve the analysis of the mechanism proposed in~\cite{bei2012budget}. Specifically, we improve the approximation ratio of the mechanism from $768$ to $436$ by tuning the parameters in the mechanism.


\noindent{\bf Related Work.} The study of budget feasible mechanisms was
initiated in~\cite{singer2010budget}. It essentially focuses on the
procurement auctions when sellers have private costs for their
objects and a buyer aims to maximize his valuation function on
subsets of objects, conditioned on that the sum of the payments
given to sellers {\em cannot} exceed a given budget of the buyer. In
particular Singer~\cite{singer2010budget} considered budget
feasible mechanisms when the valuation function of the buyer is
nondecreasing submodular. For general nondecreasing submodular
functions, Singer~\cite{singer2010budget} gave a lower bound of $2$
for deterministic budget feasible mechanisms and a randomized budget
feasible mechanism with $112$-approximation. When the valuation function of the buyer is additive, a special class of
nondecreasing submodular functions, 
Singer~\cite{singer2010budget} gave a polynomial deterministic
budget feasible mechanism with $6$-approximation and a lower bound
of $2$ for any deterministic budget feasible mechanism. All results
were improved in~\cite{chen2011approximability}, for example, a
deterministic budget feasible mechanism with
$2+\sqrt{2}$-approximation and an improved lower bound of
$1+\sqrt{2}$ for any deterministic budget feasible mechanism for
additive valuations were given.  Furthermore, Bei et
al.~\cite{bei2012budget} gave a 768-approximation mechanism for XOS valuations and extended their study to Bayesian settings. Chan and Chen~\cite{chan2014truthful} studied  budget feasible mechanisms in the settings in which each seller processes multiple copies of the objects. They gave logarithmic mechanisms for concave additive valuations and sub-additive valuations. 

Budget feasible mechanisms are attractive to many
communities due to their various
applications. In crowdsourcing the goal is to
assign skilled workers to tasks when workers have private costs.  By injecting some
characteristics in crowdsourcing, budget feasible mechanisms have
been further developed and improved. For example, Goel et
al.~\cite{goel2014allocating} developed budget feasible mechanisms
that achieve $\frac{2e-1}{e-1}$-approximation to the optimal social
welfare by exploiting the assumption that one worker has limited
contribution to the social welfare. Furthermore
Anari et al.~\cite{anari2014mechanism} gave a budget feasible
mechanism that achieves a competitive ratio of $1-1/e \approx 0.63$
by using the assumption that the cost of any worker is relatively
small compared to the budget of the buyer. 

Another work close to ours is~\cite{bikhchandani2011ascending}, which studies the ``dual" problem of maximizing the revenue by selling the maximum independent set of a matroid. They proposed a truthful ascending auction in which  a seller is constrained to sell objects that forms a basis in a matroid. 


\section{Preliminaries}
\label{sec:preliminaries}
\paragraph{Matroids.} A matroid $\mathcal{M}$ is a pair of $(E, \mathcal{I})$ where $E$ is a ground set of finite elements and $\mathcal{I} \subseteq 2^{E}$ consists of subsets of the ground set satisfying the following properties:

\begin{itemize}
	\item Hereditary property: If $I \in \mathcal{I}$, then $J \in \mathcal{I}$ for every $J \subset I$.
	\item Exchange property: For any pair of sets $I, J \in \mathcal{I}$, if $|I| < |J|$, then there exists an element $e \in J$ such that $I \cup \{e\} \in \mathcal{I}$.
\end{itemize}

The sets in $\mathcal{I}$ are called \textit{independent sets}.
Given a matroid $\mathcal{M}=(E, \mathcal{I})$ and $T \subseteq E$ is a subset of $E$, the {\em restriction} of $\mathcal{M}$ to $T$, denoted by $\mathcal{M}|T$, is the matroid in which the ground set is $T$ and the independent sets are the independent sets of $\mathcal{M}$ that are contained in $T$. That is, $\mathcal{M} | T = (T, \mathcal{I}(\mathcal{M}|T))$ where $\mathcal{I}(\mathcal{M}|T) = \{ I \subseteq T : I \in \mathcal{I} \}$. Similarly,  the {\em deletion} of $\mathcal{M}$, denoted by $M\setminus T$, is the matroid in which the ground set is $E - T$ and the independent sets are the independent sets of $\mathcal{M}$ that do not contain any element in $T$. That is, $\mathcal{M} \setminus T = (E - T,  \mathcal{I}(\mathcal{M} \setminus T))$ where $\mathcal{I}( \mathcal{M} \setminus T) = \{I \subseteq E - T: I \in \mathcal{I}\}$.

\paragraph{Matroid Budget Feasible Mechanisms.} In an instance of the matroid budget feasible mechanism design problem, we are given a matroid $\mathcal{M} = (E, \mathcal{I})$ consisting of $n$ ground elements, each of whom is associated with a weight $w_e \in \mathbb{R}_+$. Each element $e \in {E}$ is also associated with a private cost $c_e \in \mathbb{R}_+$, which is only known to the element itself. Our goal is to design a truthful mechanism that gives incentives to elements for declaring their private costs truthfully and then selects an independent set conditioned on that the total payment given to the elements does not exceed a given budget $b$.
Given an independent set $I \in \mathcal{I}$, the value of the independent set is defined by $w(I) = \sum_{e \in I} w(e)$. We compare the value of the independent set selected by the
mechanism against the value of the maximum-value independent set in which the total true cost of elements does not exceed the budget.

We use $\mathbf{w} = \langle w_1, \ldots, w_n \rangle$ to denote the weight of the ground elements and use $\mathbf{d}=\langle d_1,\ldots,d_n\rangle$  to denote the costs declared by the ground elements.
Let $\tau$ be the maximum-weight element (breaking ties arbitrarily), that is, $w_\tau = \max_{e \in E}w_e$.  We assume that $d_e \in \mathbb{R}_+$ and $d_e \leq b$ for any $e \in E$ since elements with costs greater than $b$ cannot be selected by any mechanism due to the budget constraint. This also implies that no element could improve its utility by declaring $d_i > b$. Given a subset of element $T$, we use $\mathbf{w}_{-T}$ and $\mathbf{d}_{-T}$ to denote the weight and cost vector excluding elements in $T$. Similarly, we use  $\mathbf{w}_{T}$ and $\mathbf{d}_{T}$ to denote the weight and cost vector only including elements in $T$.  For each element $e \in {E}$, $\mathsf{bb}(e) = \frac{d_e}{w_e}$ is called the {\em buck-per-bang} rate for element $e$.\footnote{$\frac{w_e}{c_e}$ is usually known as the \textit{bang-per-buck} rate. To simplify the presentation, we call $\frac{d_e}{w_e}$ the {\em buck-per-bang} rate.}

A deterministic mechanism $M = (f,p)$
consists of an allocation function $f:  \mathcal{M}, \mathbf{w}, \mathbf{d}, b
\to I \in \mathcal{I}$ and a payment function $p: \mathcal{M},\mathbf{w}, \mathbf{d}, b
\to \mathbb{R}_+^n$. Given the weights and declared costs of the ground elements, the allocation function returns an independent set in the matroid and the payment function indicates the payments for all elements.
Let $\mathbf{f}^{{M}}(\mathcal{M},\mathbf{w},
\mathbf{d},b)$ and $\mathbf{p}^{{M}}(\mathcal{M}, \mathbf{w},
\mathbf{d}, b)$ be the independent set and payments returned by
${M}$, respectively. If element $e$ is in the independent set
obtained by ${M}$, then $f^{M}_e(\mathcal{M}, \mathbf{w},
\mathbf{d},b)=1$. Otherwise, $f^{M}_e(\mathcal{M}, \mathbf{w},
\mathbf{d},b)=0$. It is assumed that $p^{M}_e(\mathcal{M}, \mathbf{w},
\mathbf{d},b)=0$ if $f^{M}_e(\mathcal{M}, \mathbf{w}, \mathbf{d},b)=0$.
The utility of an element is the difference between the payment received from the mechanism and its true cost. More specifically, the utility of element $e$ is given by $u^{{M}}_e(\mathcal{M},
\mathbf{w}, \mathbf{d},b) =   p^{M}_e(\mathcal{M}, \mathbf{w},
\mathbf{d},b) - f^{M}_e(\mathcal{M}, \mathbf{w}, \mathbf{d},b) \cdot c_e$.

\paragraph{Individual Rationality:}
A mechanism $M$ is {\em individually rational} if $p^M_e(\mathcal{M}, \mathbf{w}, \mathbf{d},b) - f^{M}_e(\mathcal{M},\mathbf{w}, \mathbf{d},b) \cdot d_e \geq 0$ for any $\mathcal{M}$, any $\mathbf{w} \in \mathbb{R}_+^n$, any $\mathbf{d} \in \mathbb{R}_+^n$,  any $b \in \mathbb{R}_+$ and any element $e \in E$. That is, no element in the selected independent set is paid less than the cost it declared.

\paragraph{Truthfulness:}
A mechanism ${M}$ is {\em truthful} if it holds  $ u^{{M}}_e(\mathcal{M}, \mathbf{w}, \mathbf{d}_{-e}, c_e, b) \geq u^{{M}}_e(\mathcal{M}, \mathbf{w},\mathbf{d}_{-e}, d_e, b)$ for any $\mathcal{M}$, any $\mathbf{w} \in \mathbb{R}_+^n$, any $\mathbf{d}_{-e} \in \mathbb{R}_+^{n-1}$, any $d_e \in \mathbb{R}_+$, any $c_{e} \in \mathbb{R}_+$,  $b \in \mathbb{R}_+$ and any $e  \in E$, where $\mathbf{d}_{-e} = \langle d_1,\ldots,d_{e-1},d_{e+1},\ldots, d_{n} \rangle$. When the context is clear, we sometimes abuse some notations. For example, here we write  $u^{{M}}_e(\mathcal{M}, \mathbf{w}, \mathbf{d}_{-e}, c_e,b)$ instead of $u^{{M}}_e(\mathcal{M}, \mathbf{w}, \langle\mathbf{d}_{-e}, c_e\rangle,b)$.  A truthful mechanism prevents any element improving its utility by mis-declaring its cost regardless the costs declared by other elements.

\paragraph{Budget Feasibility:}
A mechanism ${M}$ is {\em budget feasible} if it holds that $\sum_{e \in {E}} p^{{M}}_e(\mathcal{M},\mathbf{w}, \mathbf{d},b) \leq b$ for any $\mathcal{M}, \mathbf{w}\in \mathbb{R}_+^n$,  any $\mathbf{d} \in \mathbb{R}_+^n$ and any $b \in \mathbb{R}_+$.

\paragraph{Competitiveness:}
A mechanism ${M}$ is {\em $\alpha$-competitive} if   $w(f^{{M}}(\mathcal{M},\mathbf{w},\mathbf{d},b)) \geq \frac{1}{\alpha} w( \mathsf{OPT}(\mathcal{M}, \mathbf{w},\mathbf{d},b))$  for any $\mathbf{w} \in \mathbb{R}_+^n, \mathbf{d} \in \mathbb{R}_+^n$ and $b \in \mathbb{R}_+$, where $\mathsf{OPT}(\mathcal{M}, \mathbf{w},\mathbf{d},b)$ is the maximum-value independent set in which the total cost of the elements is at most $b$. We often call  $\mathsf{OPT}(\mathcal{M}, \mathbf{w},\mathbf{d},b)$ the optimal independent set
and simplify it as $\mathsf{OPT}(\mathcal{M},b)$ throughout the paper when the weights and the costs of elements are clear. Similarly we use $\mathsf{MAX}(\mathcal{M},\mathbf{w})$, shorten by $\mathsf{MAX}(\mathcal{M})$,  to denote the maximum-value independent set in $\mathcal{M}$ without considering the budget constraint.

\paragraph{Simplifying notations.}  From now on to avoid heavy notations we sometimes simplify the notations. For example we will write $f^{M}, f^{M}_e, p^{M}, p^{M}_e$ when the inputs of the mechanism are clear. And we will use $\mathsf{OPT}(\mathcal{M}\setminus T,b)$ instead of $\mathsf{OPT}(\mathcal{M}\setminus T, \mathbf{w}_{-T}, \mathbf{d}_{-T}, b)$ to denote the optimal independent set in matroid $\mathcal{M}\setminus T$. Similarly  we will use $\mathsf{OPT}(\mathcal{M} |T,b)$ instead of $\mathsf{OPT}(\mathcal{M} | T, \mathbf{w}_{T}, \mathbf{d}_T, b)$ to denote the optimal independent set in matroid $M | T$. Furthermore we use $\mathsf{MAX}(\mathcal{M} \setminus T)$ instead of  $\mathsf{MAX}(\mathcal{M} \setminus T, \mathbf{w}_{-T})$ to denote  the maximum-value independent set in $\mathcal{M}\setminus T$ without considering the costs of the elements and the budget.

\section{Mechanisms for Matroids}\label{sec_Mechanisms for Matroids}
In this section we provide our main result. We give a deterministic, polynomial time, individually rational, truthful and budget feasible mechanism that is 4-approximating the optimal independent set. Before providing the mechanism we discuss some intuition that guides us in the design of Mechanism~\ref{alg:budgetedMatroids}. First imagine that there exists an element with a very high weight, i.e., any independent set without this element results in a poor value compared to the optimal independent set. In this case that element may strategically declare a high cost in order to increase its utility as it knows that any competitive mechanism has to select it. To avoid that this happens we remove element $\tau$ (i.e., the element with the largest weight) from the matroid via matroid deletion operation, and compare it with the independent set computed later by the mechanism. Second we observe that most of the existing budget feasible mechanisms adopt proportional payment schemes, where elements (i.e., agents) are paid proportionally according to their contribution in the solution.  In other words in a proportional payment scheme there is an uniform price such that the payments for elements in the solution are the products of their contribution and this price.  In addition greedy algorithms are commonly used in matroid systems. Combining these two observations our plan is to start from a high price and compute the maximum-value independent set in the matroid at each iteration. If there is enough budget to pay this independent set at the current price then we proceed to the final step of the mechanism. Otherwise we reduce the price and remove an element from the matroid. The buck-per-bang rate of that element becomes an upper bound of the payment on each contribution in the next iteration. The mechanism performs the procedure described above until the payment of the maximum-value independent set is within budget $b$.
As we will show next, if the value of the optimal independent set does not come from a single element, we are able to retain most of the value of the optimal independent set after removing those elements. Finally, we show that returning the better solution between the maximum-value independent set found and element $\tau$ approximates the value of the optimal independent set within a factor of $4$.

\begin{algorithm}[h]
	\KwIn{$\mathcal{M}=(E,\mathcal{I}), \mathbf{w}, \mathbf{d},b$}
	\KwOut{$\mathbf{f},\mathbf{p}$}
	Sort elements in $E - \tau$ in a non-increasing order of buck per bang, i.e. $\mathsf{bb}(i) \geq \mathsf{bb}(j)$ if $ i < j$, break ties arbitrarily\;
	Let $\mathsf{bb}(0) = +\infty$, $i=1$ and $T = \emptyset$\;
	Set $r = \mathsf{bb}(i)$\;
	\While{$w(\mathsf{MAX}(\mathcal{M}\setminus (T \cup \tau) ))\cdot r > b$}{
			$T = T \cup \{i\}$ and $i = i+1$\;
	}
	$r = \min\{ \frac{b}{w(\mathsf{MAX}(\mathcal{M}\setminus (T \cup \tau) ))},  \mathsf{bb}(i-1)\}$\label{line:HatRBudget}\;
\uIf{$w(\mathsf{MAX}(\mathcal{M}\setminus (T \cup \tau) ))>w_\tau$}{
	For each $e \in E$, if $e \in \mathsf{MAX}(\mathcal{M}\setminus (T \cup \tau) ), f_e = 1$ and $ p_e = r \cdot w_e$. Otherwise, $f_e = 0$ and $ p_e = 0$\;
}\Else{
$f_\tau = 1, p_\tau = b$. For edge $e \in E - \tau, f_e = 0, p_e = 0$\;
}
\KwRet{$\mathbf{f}, \mathbf{p}$}\;
\SetAlgorithmName{Mechanism}{mechanism}{List of Mechanisms}
\caption{A budget feasible mechanism for procuring independent sets in matroids}
\label{alg:budgetedMatroids}
\end{algorithm}
\begin{theorem}
	Mechanism~\ref{alg:budgetedMatroids} is a deterministic, polynomial time, individually rational, truthful and budget feasible  mechanism that is $4$-competitive against the optimal independent set given a budget. 
\end{theorem}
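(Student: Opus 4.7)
The plan is to verify the four claimed properties separately: budget feasibility and individual rationality follow directly from the choice of the rate $r$; the $4$-approximation reduces to a matroid-deletion inequality applied to the last element added to $T$; and truthfulness, which is the main obstacle, follows from Myerson's characterization for single-parameter domains.

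Budget feasibility is immediate: the $\tau$-branch pays only $p_\tau=b$, and in the other branch the total payment is $r\cdot w(\mathsf{MAX}(\mathcal{M}\setminus(T\cup\tau)))\leq b$ by the choice of $r$ on line~\ref{line:HatRBudget}. For individual rationality in the $\tau$-branch, $p_\tau=b\geq d_\tau$ by the standing assumption $d_e\leq b$. In the other branch, every selected $e$ lies outside $T\cup\tau$, hence at position $\geq i$ in the sorted order, so $\mathsf{bb}(e)\leq\mathsf{bb}(i)$. The loop-exit condition forces $b/w(\mathsf{MAX}(\mathcal{M}\setminus(T\cup\tau)))\geq\mathsf{bb}(i)$, and $\mathsf{bb}(i-1)\geq\mathsf{bb}(i)$ by the sort, so $r\geq\mathsf{bb}(i)\geq\mathsf{bb}(e)$ and therefore $p_e=r\cdot w_e\geq d_e$.

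For the $4$-approximation, let $O=\mathsf{OPT}(\mathcal{M},b)$ and $S=\mathsf{MAX}(\mathcal{M}\setminus(T\cup\tau))$. I would decompose
\[
w(O)=w(O\cap T)+w(O\setminus(T\cup\tau))+w_\tau\cdot\mathbb{1}[\tau\in O].
\]
The middle term is at most $w(S)$ since $O\setminus(T\cup\tau)$ is an independent set of $\mathcal{M}\setminus(T\cup\tau)$. For the first term, every $e\in T$ has $\mathsf{bb}(e)\geq\mathsf{bb}(i-1)$ by the sort, so the budget constraint $\sum_{e\in O\cap T}d_e\leq b$ gives $w(O\cap T)\leq b/\mathsf{bb}(i-1)$. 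The iteration that added $i-1$ to $T$ had $w(\mathsf{MAX}(\mathcal{M}\setminus((T\setminus\{i-1\})\cup\tau)))\cdot\mathsf{bb}(i-1)>b$, hence $b/\mathsf{bb}(i-1)<w(\mathsf{MAX}(\mathcal{M}\setminus((T\setminus\{i-1\})\cup\tau)))$. Using the standard matroid-deletion inequality $w(\mathsf{MAX}(\mathcal{M}'))\leq w(\mathsf{MAX}(\mathcal{M}'\setminus\{f\}))+w_f$ (obtained by removing $f$ from an optimal independent set of $\mathcal{M}'$ and invoking the hereditary property), this is at most $w(S)+w_{i-1}\leq w(S)+w_\tau$. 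Chaining the bounds, $w(O)\leq 2w(S)+2w_\tau\leq 4\max(w(S),w_\tau)$, which is exactly the value returned by the mechanism.

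The remaining and genuinely harder step is truthfulness. For $\tau$, its selection depends only on the public weights (namely, whether $w_\tau\geq w(\mathsf{MAX}(\mathcal{M}\setminus(T\cup\tau)))$), and since $p_\tau=b\geq d_\tau$ whenever $\tau$ is taken, no deviation is profitable. For $e\neq\tau$ I would invoke Myerson's characterization: first verify monotonicity—lowering $d_e$ moves $e$ to a later position in the sorted order and therefore makes it less likely to be placed in the prefix $T$, while $\mathsf{MAX}(\mathcal{M}\setminus(T\cup\tau))$ depends only on weights and improves weakly as $T$ shrinks (with a consistent tie-breaking rule)—and then verify that $p_e=r\cdot w_e$ equals the critical bid above which $e$ would be dropped. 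The delicate part is tracking how the threshold index $i$ and the rate $r$ move as $d_e$ varies while $e$ remains selected; the payment formula works precisely because $r$ is pinned down either by $\mathsf{bb}(i-1)$ (the last removed element's declaration) or by the budget-feasibility cap $b/w(\mathsf{MAX}(\mathcal{M}\setminus(T\cup\tau)))$, neither of which depends on a selected element's own bid.
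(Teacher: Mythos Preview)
Your treatment of budget feasibility, individual rationality, and the approximation ratio is essentially the paper's, and in fact your approximation argument is slightly cleaner: you avoid the paper's case split on whether the full budget is spent by using directly that the failed check at iteration $i-1$ gives $b/\mathsf{bb}(i-1)<w(\mathsf{MAX}(\mathcal{M}\setminus((T\setminus\{i-1\})\cup\tau)))\leq w(S)+w_{i-1}$ via the deletion inequality.

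For truthfulness you take a genuinely different route. The paper argues by exhaustive case analysis (four lemmas, one for each of the positions $\tau$, $T$, $\mathsf{MAX}\setminus T$, and the rest), tracking the mechanism's run iteration by iteration under each possible deviation. Your Myerson-based approach is more principled and, once the two conditions are verified, shorter. However, your monotonicity justification has a gap: you write that lowering $d_e$ makes $T$ shrink and that ``$\mathsf{MAX}(\mathcal{M}\setminus(T\cup\tau))$ improves weakly as $T$ shrinks,'' but membership of a fixed element $e$ in the maximum-weight independent set is \emph{not} monotone as elements are added back to the ground set (a rank-$1$ uniform matroid already gives a counterexample). The observation that actually rescues your argument is that $T$ does not shrink at all---it stays exactly the same: when $e$ is selected and lowers its bid, positions $1,\dots,i-1$ in the sorted order are unchanged (they only involve elements with buck-per-bang strictly above $\mathsf{bb}(e)$), so iterations $1,\dots,i-1$ proceed identically, and the stopping test at iteration $i$ is met with the same or a smaller rate. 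Since $T$ is unchanged, $S=\mathsf{MAX}(\mathcal{M}\setminus(T\cup\tau))$ is unchanged and $e\in S$ persists. The critical-bid check then goes through exactly as you sketch: raising $d_e$ past $r\cdot w_e$ forces either $\mathsf{bb}(e)>b/w(S)$ (so $e$ is removed at its own iteration) or $\mathsf{bb}(e)>\mathsf{bb}(i-1)$ (so $e$ enters a position $<i$ and is removed there), and $r=\min\{b/w(S),\mathsf{bb}(i-1)\}$ is indeed independent of $e$'s own bid. With this correction your Myerson proof is complete and arguably preferable to the paper's four-lemma case analysis.
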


\subsection{Approximation}
Recall that ${T}$ is the set of elements removed from the matroid.
$\mathsf{MAX}(\mathcal{M}\setminus (T \cup \tau) )$ is the independent set found when Mechanism~\ref{alg:budgetedMatroids} stops, and it is also the maximal-value independent set in matroid $M\setminus (T \cup \tau)$. The roadmap of the proof is to first show that,  the independent set $\mathsf{MAX}(\mathcal{M}\setminus (T \cup \tau) )$
well approximates the optimal independent set in matroid $M \setminus \tau$. Next we show
that returning the maximum between $\tau$ and $\mathsf{MAX}(\mathcal{M}\setminus (T \cup \tau) )$
gives $4$-approximation to the optimal independent set in matroid $\mathcal{M}$.

\begin{lemma}
	\label{lem:approximaiton}
	Given any $\mathcal{M}, \mathbf{w}, \mathbf{d}, b$, when Mechanism~\ref{alg:budgetedMatroids} stops, it holds
	\[
	w(\mathsf{OPT}(\mathcal{M} \setminus  \tau, b)) \leq 2 w(\mathsf{MAX}(\mathcal{M}\setminus (T \cup \tau) )) + w_\tau
	\]
\end{lemma}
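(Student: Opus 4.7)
Let $O\defeq\mathsf{OPT}(\mathcal{M}\setminus\tau,b)$ and $M\defeq\mathsf{MAX}(\mathcal{M}\setminus(T\cup\tau))$. The natural plan is to split $O$ according to whether its elements lie in $T$:
\[
 w(O) \;=\; w(O\setminus T) \;+\; w(O\cap T),
\]
and bound each piece. The first piece is easy: $O\setminus T\in\mathcal{I}(\mathcal{M}\setminus(T\cup\tau))$ by the hereditary property, and so $w(O\setminus T)\le w(M)$ by definition of $\mathsf{MAX}$. Everything therefore reduces to proving
\[
 w(O\cap T) \;\le\; w(M)+w_\tau.
\]

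For the second piece the plan is to exploit the reason the mechanism chose to kick each element of $T$ out. Assume $T\neq\emptyset$ (otherwise the claim is vacuous), let $k$ be the last element added to $T$, and write $r_0=\mathsf{bb}(k)$. Since elements are processed in non-increasing buck-per-bang order, every $e\in T$ satisfies $\mathsf{bb}(e)\ge r_0$, hence $d_e\ge r_0\cdot w_e$. Summing over $O\cap T$ and using $\sum_{e\in O}d_e\le b$ gives
\[
 r_0\cdot w(O\cap T)\;\le\;\sum_{e\in O\cap T}d_e\;\le\;b,\qquad\text{i.e.,}\qquad w(O\cap T)\;\le\;b/r_0.
\]

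Next I would translate the while-loop's exit history into an upper bound on $b/r_0$. Just before $k$ was inserted the price was $r_0$ and the loop condition was still violated with the smaller set $T'\defeq T\setminus\{k\}$, yielding
\[
 w\bigl(\mathsf{MAX}(\mathcal{M}\setminus(T'\cup\tau))\bigr)\cdot r_0 \;>\; b.
\]
I would then invoke the standard matroid fact that reinstating a single element $k$ into the ground set can raise the maximum-weight independent set by at most $w_k$: given any independent set $I$ in $\mathcal{M}\setminus(T'\cup\tau)$, $I\setminus\{k\}$ is independent in $\mathcal{M}\setminus(T\cup\tau)$ and has weight at least $w(I)-w_k$. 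This yields $w(\mathsf{MAX}(\mathcal{M}\setminus(T'\cup\tau)))\le w(M)+w_k$, and combining with the displayed strict inequality gives $b/r_0<w(M)+w_k\le w(M)+w_\tau$ (the last step because $\tau$ is the maximum-weight element). Chaining with the previous estimate produces $w(O\cap T)<w(M)+w_\tau$, and adding $w(O\setminus T)\le w(M)$ completes the proof.

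The main obstacle is the matroid-theoretic lemma in the previous paragraph, namely that deleting one element can decrease the weight of the best independent set by at most the weight of that element; this is what allows us to compare $\mathsf{MAX}$ before and after element $k$ is removed and is precisely the place where the proof uses that we are working with matroids rather than with arbitrary downward-closed families. The remaining ingredients (splitting $O$ by membership in $T$, converting the exit condition into an inequality on $r_0$, and using $\mathsf{bb}(e)\ge r_0$ for $e\in T$ to bound total weight by total cost over $r_0$) are bookkeeping once that lemma is in hand.
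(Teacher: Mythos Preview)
Your argument is correct and follows essentially the same skeleton as the paper's proof: split $O$ according to membership in $T$, bound the ``outside'' piece by $w(M)$ trivially, bound the ``inside'' piece by $b/r_0$ using the buck-per-bang lower bound on $T$, and then control $b/r_0$ through the while-loop history and the inequality $w(\mathsf{MAX}(\mathcal{M}\setminus(T'\cup\tau)))\le w(M)+w_k$. The paper arrives at the same final chain but organizes it as a case split on whether the budget is fully spent (i.e., whether the final rate $r$ equals $b/w(M)$ or $\mathsf{bb}(i-1)$); by anchoring everything at $r_0=\mathsf{bb}(k)$ for the last removed element, you avoid that case split entirely, which is a mild simplification.

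One small remark: the single-element-removal inequality you flag as the ``matroid-specific'' step in fact only uses the hereditary property (if $I$ is independent then $I\setminus\{k\}$ is independent), so it holds for any downward-closed system, not just matroids. This does not affect correctness, only your commentary on where the matroid structure is essential.
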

\begin{proof}
	It is trivial to see that this lemma holds when $\tau$ is the only element in matroid $\mathcal{M}$.
	The rest of the proof uses a similar idea in~\cite{goel2014allocating} and is divided into two cases depending on whether the full budget $b$ is spent or not. 
	Consider $E - \{\tau\} $ is partitioned into two disjoint sets, $E - \{\tau\} - T$ and ${T}$.
	The value of maximum-value independent set $w(\mathsf{OPT}(\mathcal{M} \setminus  \tau, b)) $ is bounded by
	\[
	w(\mathsf{OPT}(\mathcal{M} | T, b)) +  w(\mathsf{OPT}(\mathcal{M} \setminus (T \cup \tau)    ,b))
	\]
	As the buck-per-bang is at least $r$ for every element in $T$, the optimal independent set given a budget $b$ in $\mathcal{M} | T$, i.e. $w(\mathsf{OPT}(\mathcal{M} | T, b)) $,  is at most $b/r$. When the full budget is spent, the independent set $f^M$ is $b/r$ in Mechanism~\ref{alg:budgetedMatroids}. On the other hand, $f^M$ is the maximum-value independent set in $\mathcal{M} \setminus (T \cup \tau)$. It implies that $w(\mathsf{MAX}(\mathcal{M}\setminus (T \cup \tau) )) \geq w(\mathsf{OPT}(\mathcal{M} \setminus (T\cup \tau),b))$. The above analysis concludes that
	\[
		w(\mathsf{OPT}(\mathcal{M} \setminus  \tau, b))  \leq 2 w(\mathsf{MAX}(\mathcal{M}\setminus (T \cup \tau) ))
	\]
	
	Now we turn to the case that some budget is left in Mechanism~\ref{alg:budgetedMatroids}. Note that it happens because $r = \mathsf{bb}(i-1)$ (see Line~\ref{line:HatRBudget}) during the execution of Mechanism~\ref{alg:budgetedMatroids}. Since Mechanism~\ref{alg:budgetedMatroids} does not stop when $r =\mathsf{bb}(i-1)$, it implies that the maximum-value independent set found was not budget feasible at previous iteration. 
	 After removing element $i-1$, the maximum-value independent set becomes budget feasible when $r= \mathsf{bb}(i-1)$. These together imply
	\[
	w(\mathsf{MAX}(\mathcal{M} \setminus (T' \cup \tau) )) \cdot \mathsf{bb}(i-1)  > b > w(\mathsf{MAX}(\mathcal{M} \setminus (T \cup \tau) ))  \cdot \mathsf{bb}(i-1)
	\]
	where $T' = T - \{i-1\}$.
	This further implies that budget left is at most $\mathsf{bb}({i-1})\cdot w_{i-1}$. By the similar argument as in previous case, the optimal independent set in $\mathcal{M}|T$ is at most $b/r$, while the value of the independent set $\mathsf{MAX}(\mathcal{M}\setminus (T \cup \tau) )$
	 is at least $(b - \mathsf{bb}(i-1)\cdot w_{i-1})/r$, which is at least $b/r - w_{i-1}$.
	 Therefore, we have
	\[
		w(\mathsf{OPT}(\mathcal{M} \setminus  \tau, b)) \leq 2 w(\mathsf{MAX}(\mathcal{M}\setminus (T \cup \tau) )) +  w_{i-1}
	\]
	Substituting $w_{i-1}$ with $w_\tau$ completes the proof.
\qed
\end{proof}

Next, we show that returning the maximum between $\tau$ and $\mathsf{MAX}(\mathcal{M}\setminus (T \cup \tau) )$ is $4-$competitive against the optimal independent set in $\mathcal{M}$.

\begin{lemma}
	Given any $\mathcal{M}, \mathbf{w}, \mathbf{d}, b$, the independent set returned by Mechanism~\ref{alg:budgetedMatroids}, i.e., the maximum between $\tau$ or $\mathsf{MAX}(\mathcal{M}\setminus (T \cup \tau) )$, is $4-$competitive against the optimal independent set.
\end{lemma}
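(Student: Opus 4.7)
The plan is to reduce this statement to Lemma~\ref{lem:approximaiton} by a single additional observation: deleting the heaviest element $\tau$ from the matroid can lose at most $w_\tau$ from the optimal value. Once we have that, combining with the previous lemma and taking the maximum of the two candidates $\tau$ and $\mathsf{MAX}(\mathcal{M}\setminus(T\cup\tau))$ immediately delivers factor $4$.

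First I would establish the comparison step
\[
w(\mathsf{OPT}(\mathcal{M}, b)) \;\leq\; w(\mathsf{OPT}(\mathcal{M}\setminus\tau, b)) + w_\tau.
\]
Let $O = \mathsf{OPT}(\mathcal{M}, b)$. If $\tau \notin O$, then $O$ is already independent in $\mathcal{M}\setminus\tau$ and its cost is within budget, so $w(\mathsf{OPT}(\mathcal{M}\setminus\tau, b)) \geq w(O)$. If $\tau \in O$, the hereditary property implies $O\setminus\{\tau\}$ is independent in $\mathcal{M}\setminus\tau$, its total declared cost does not exceed $b$, and its value is at least $w(O) - w_\tau$. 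Either way the inequality holds.

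Next I would invoke Lemma~\ref{lem:approximaiton} to replace $w(\mathsf{OPT}(\mathcal{M}\setminus\tau, b))$ and obtain
\[
w(\mathsf{OPT}(\mathcal{M}, b)) \;\leq\; 2\,w(\mathsf{MAX}(\mathcal{M}\setminus(T\cup\tau))) + 2\,w_\tau.
\]
Letting $X \defeq \max\bigl\{w_\tau,\; w(\mathsf{MAX}(\mathcal{M}\setminus(T\cup\tau)))\bigr\}$, both summands on the right are bounded by $2X$, so $w(\mathsf{OPT}(\mathcal{M}, b)) \leq 4X$. Since Mechanism~\ref{alg:budgetedMatroids} returns precisely the better of the two candidates, its value equals $X$, proving $4$-competitiveness.

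The whole argument is essentially a two-line calculation once Lemma~\ref{lem:approximaiton} is in hand, so there is no serious obstacle. The only subtle point worth stating cleanly is the first step: one must verify that $O\setminus\{\tau\}$ remains budget feasible under the declared costs, which is immediate since removing an element can only decrease total cost. Everything else is a direct arithmetic manipulation.
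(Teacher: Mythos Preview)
Your proposal is correct and follows exactly the same approach as the paper: bound $w(\mathsf{OPT}(\mathcal{M},b)) \leq w_\tau + w(\mathsf{OPT}(\mathcal{M}\setminus\tau,b))$, apply Lemma~\ref{lem:approximaiton}, and conclude that the maximum of the two candidates is within factor~$4$. If anything, you are more careful than the paper, which asserts the first inequality without the case analysis you supply.
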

\begin{proof}
	The optimal independent set in $\mathcal{M}$ is bounded by
	\[
	w(\mathsf{OPT}(\mathcal{M},b)) \leq w_\tau + w(\mathsf{OPT}(\mathcal{M} \setminus  \tau, b))
	\]
	By Lemma~\ref{lem:approximaiton}, we have
	\[
	w(\mathsf{OPT}(\mathcal{M},b)) \leq 2w_\tau + 2w(\mathsf{MAX}(\mathcal{M}\setminus (T \cup \tau) ))
	\]
	Therefore, the maximum between $\tau$ and $\mathsf{MAX}(\mathcal{M}\setminus (T \cup \tau) )$ approximates the optimal independent set within a factor of 4.
\qed
\end{proof}

\subsection{Truthfulness}
\label{sec:truthfulmatroid}
In this section, we will show that Mechanism~\ref{alg:budgetedMatroids} is  truthful. It is easy to see that  element $\tau$ cannot benefit by manipulating its cost.

\begin{lemma}
	The element with the maximum weight, i.e., element $\tau$, could not improve his utility by declaring cost $d_\tau \neq c_\tau$.
\end{lemma}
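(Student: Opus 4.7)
My plan is to prove the lemma by observing that the entire execution of Mechanism~\ref{alg:budgetedMatroids}, as seen by element $\tau$, is in fact independent of its declared cost $d_\tau$. Once this decoupling is established, truth-telling is trivially a (weakly) best response: changing $d_\tau$ cannot change either the allocation bit $f_\tau$ or the payment $p_\tau$, so it cannot change $u_\tau = p_\tau - f_\tau\cdot c_\tau$.

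The verification proceeds by walking through every place in the mechanism where a cost is consulted. First, the sort in line~1 ranks the elements of $E-\tau$ by buck-per-bang, so $d_\tau$ never enters the ordering, nor does it enter any $\mathsf{bb}(i)$ used afterwards. Second, the while-loop condition $w(\mathsf{MAX}(\mathcal{M}\setminus(T\cup\tau)))\cdot r > b$ and the update in line~\ref{line:HatRBudget} only involve weights (which are public) and the $\mathsf{bb}(j)$ values of elements in $E-\tau$, so both the final set $T$ and the final rate $r$ are computed without reference to $d_\tau$. Third, the branch condition $w(\mathsf{MAX}(\mathcal{M}\setminus(T\cup\tau)))>w_\tau$ compares two quantities neither of which depends on $d_\tau$: $w_\tau$ is public, and $\mathsf{MAX}(\mathcal{M}\setminus(T\cup\tau))$ already excludes $\tau$ from its ground set.

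Thus in the ``if'' branch, $\tau$ is not selected and receives $p_\tau=0$, yielding $u_\tau=0$ regardless of $d_\tau$. In the ``else'' branch, $\tau$ is selected with $p_\tau=b$ (a constant independent of $d_\tau$), yielding $u_\tau=b-c_\tau$ regardless of $d_\tau$. Since both the branch choice and the within-branch outcome are functions of $(\mathcal{M},\mathbf{w},\mathbf{d}_{-\tau},b)$ alone, we have $u^{M}_\tau(\mathcal{M},\mathbf{w},\mathbf{d}_{-\tau},d_\tau,b)=u^{M}_\tau(\mathcal{M},\mathbf{w},\mathbf{d}_{-\tau},c_\tau,b)$ for every $d_\tau\in\mathbb{R}_+$, which is the truthfulness inequality (with equality).

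There is essentially no obstacle here: the lemma is a structural consequence of the design choice to strip $\tau$ out of the sorted list and to treat it via a separate branch whose test uses only $w_\tau$. The only thing to be slightly careful about is to state explicitly that the branch condition itself does not depend on $d_\tau$, because otherwise one might worry that $\tau$ could manipulate $d_\tau$ to flip between the ``if'' and ``else'' branches; the argument above rules this out.
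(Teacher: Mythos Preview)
Your proof is correct and rests on the same observation as the paper's: the mechanism's output for $\tau$ is determined entirely by public weights and by $\mathbf{d}_{-\tau}$, so $d_\tau$ is irrelevant. The paper phrases this as a brief two-case argument (if $\tau$ is returned it already receives the maximal payment $b$; otherwise a different $d_\tau$ cannot change the outcome), whereas you make the independence explicit by walking through each line of the mechanism---but the underlying idea is identical.
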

\begin{proof}
	If Mechanism~\ref{alg:budgetedMatroids} returns element $\tau$ when  $\tau$ declares his true cost, then $\tau$ gets a payment of $b$ so that there is no incentive for him to declare other cost. On the other hand, when Mechanism~\ref{alg:budgetedMatroids} returns $\mathsf{MAX}(\mathcal{M}\setminus (T\cup \tau))$, declaring a different cost will not change the outcome as it is still the element with the largest weight and $w_\tau < w(\mathsf{MAX}(\mathcal{M}\setminus (T\cup \tau)))$.
\qed
\end{proof}

Next we show that no edge in $E - \tau$ could improve his utility by mis-declaring its cost. The proof relies on the analysis of different cases. The first case shows that those removed element in  Mechanism~\ref{alg:budgetedMatroids} cannot benefit by manipulating their costs.

\begin{lemma}
\label{lem:truthful1}
		Assume an element $k$ in  $T$ when it declares its cost truthfully. Then, element $k$ could not improve his utility by declaring a cost $d_k \neq c_k$.
\end{lemma}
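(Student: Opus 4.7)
Since $k\in T$ when truthful, $k$ is not selected and has utility $0$, so I need to show that any deviation $d_k\neq c_k$ still gives $k$ utility at most $0$. Let $\sigma(1),\ldots,\sigma(n-1)$ be the non-increasing buck-per-bang order on $E-\tau$ under truth-telling, with $k=\sigma(j)$, and abbreviate $T_m^{\text{orig}}=\{\sigma(1),\ldots,\sigma(m-1)\}$. Because $k\in T$ truthfully, the while-loop removes $\sigma(1),\ldots,\sigma(j-1)$ and at its iteration $j$ the condition $w(\mathsf{MAX}(\mathcal{M}\setminus(T_j^{\text{orig}}\cup\tau)))\cdot\mathsf{bb}(j)>b$ is still satisfied. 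I case-split on $d_k>c_k$ versus $d_k<c_k$ and compare the run on $d_k$ to the truthful run iteration by iteration.

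\textbf{Overstatement.} If $d_k>c_k$, $k$'s declared buck-per-bang grows, so its new sorted position $j''\leq j$. The first $j''-1$ positions of the new sorted list coincide with the original list (those elements' bb's are unchanged and are all at least $d_k/w_k$), so the loop decisions for the first $j''-1$ iterations match those of the truthful run. At new iteration $j''$, the current matroid is $\mathcal{M}\setminus(T_{j''}^{\text{new}}\cup\tau)$ with $T_{j''}^{\text{new}}=\{\sigma(1),\ldots,\sigma(j''-1)\}\subseteq T_j^{\text{orig}}$, so its MAX weight is at least $w(\mathsf{MAX}(\mathcal{M}\setminus(T_j^{\text{orig}}\cup\tau)))$; meanwhile the declared bb $d_k/w_k\geq\mathsf{bb}(j)$. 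The product of these two weak inequalities is at least the strict one from the truthful iteration $j$, so the while condition fires and $k$ is again placed into $T$. Utility remains $0$.

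\textbf{Understatement.} If $d_k<c_k$, the bb shrinks and the new position $j''\geq j$. For $m<j$ the new and original orders agree on both the element at position $m$ and its bb, so the first $j-1$ loop iterations are identical and the new stopping index satisfies $i^{*'}\geq j$. If $i^{*'}>j''$ then $k\in T^{\text{new}}$ and utility is $0$. Otherwise $j\leq i^{*'}\leq j''$, $k$ is not removed, and utility is $0$ unless $k\in\mathsf{MAX}(\mathcal{M}\setminus(T^{\text{new}}\cup\tau))$ and the mechanism does not fall back to $\tau$, in which case $k$ is paid $r^{\text{new}}_{\text{final}}\cdot w_k$. I bound $r^{\text{new}}_{\text{final}}=\min(b/w(\mathsf{MAX}^{\text{new}}),\mathsf{bb}^{\text{new}}(i^{*'}-1))$ by $c_k/w_k$ via the two arguments of the $\min$. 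If $i^{*'}>j$, the new position $i^{*'}-1$ is occupied by the original element $\sigma(i^{*'})$, whose bb is $\mathsf{bb}(i^{*'})\leq\mathsf{bb}(j)=c_k/w_k$, so the second argument already suffices. If $i^{*'}=j$, the stopping matroid of the new run is literally the matroid the truthful run faced at its iteration $j$, so the truthful inequality $W\cdot c_k/w_k>b$, with $W$ denoting the common MAX weight, rearranges to $b/W<c_k/w_k$ and the first argument closes the argument.

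\textbf{Main obstacle.} The delicate step is the understatement boundary subcase $i^{*'}=j$: there the last removed element in the new order is $\sigma(j-1)$, whose bb exceeds $c_k/w_k$, so the bb-of-the-last-removed-element bound is useless on its own. The role of the budget term in the $\min$ on Line~\ref{line:HatRBudget} is precisely to handle this corner, and the crucial observation is that the new run's stopping matroid equals the truthful run's iteration-$j$ matroid, so the truthful ``not budget feasible'' inequality transfers verbatim and delivers the required bound on $b/w(\mathsf{MAX}^{\text{new}})$.
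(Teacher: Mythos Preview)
Your proof is correct and follows essentially the same approach as the paper: split on overstatement versus understatement, and in the understatement branch distinguish whether the new run stops at the same iteration $j$ (use the budget term $b/w(\mathsf{MAX})$ of the final $\min$) or strictly later (use the $\mathsf{bb}(i-1)$ term, or observe that $k$ itself gets removed). Your write-up is in fact more explicit than the paper's, which in the understatement case simply asserts that ``the new upper bound is at most $\mathsf{bb}(k)$'' without tracking exactly which element sits at position $i^{*'}-1$; your identification of that element as $\sigma(i^{*'})$ and your separate handling of the boundary $i^{*'}=j$ make the argument cleaner.
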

\begin{proof}
	As $k \in T$, we know that element $k$ is not in the independent set returned by Mechanism~\ref{alg:budgetedMatroids}. Hence, his utility is zero. It
	implies, when element $k$ is considered, that is, $r = \mathsf{bb}(k)$, it holds that $w(\mathsf{MAX}(\mathcal{M} \setminus (T^{k} \cup \tau)  )  )) \cdot r > b$ where $T^k$ denotes the set of elements removed until $k$ is considered. Consider that element $k$ declares a higher cost $d_k > c_k$ and it is considered earlier at the $h^{th}$ iteration where $h\leq k$. Equivalently speaking,  $k$ becomes the element with the $h^{th}$ largest buck-per-bang rate.  In this case,  Mechanism~\ref{alg:budgetedMatroids} will not stop until $k$ is considered. Note that the declared costs are not involved in computing maximum-value independent sets. The maximum-value independent sets computed are exactly the same as declaring truthfully until the $h^{th}$ iteration. Moreover, the maximum-value independent set is also same in the $h^{th}$ iteration since the remaining elements in the matroid are the same.
	As $r$ in the $h^{th}$ is equal to or greater than before, the maximum-value independent set is not budget feasible. It implies that element $k$ will be removed from the matroid and it will never be included in an independent set in Mechanism~\ref{alg:budgetedMatroids}. Therefore, its utility is zero.
	
	We use  similar arguments to show that element $k$
	can not improve his utility by declaring a smaller cost. Consider  that element $k$ declares a smaller cost $d_k < c_k$ and it is considered at the $h^{th}$ iteration where $h \geq k$. Let us focus on how Mechanism~\ref{alg:budgetedMatroids} performs. Until the $k^{th}$ iteration, maximum-value independent sets are the same as $k$ declaring its cost truthfully and they are not budget feasible. Next, the maximum-value independent set in $k^{th}$ is also the same as before. If the independent set is budget feasible, then we know that $\frac{b}{w(\mathsf{MAX}(\mathcal{M}\setminus (T\cup\tau)))}$ is strictly less than $bb(k)=\frac{w_k}{c_k}$.  It is because that the mechanism does not terminate at $r = \frac{w_k}{c_k}$ when element $k$ declares truthfully. Therefore, even element $k$ is in this independent set, the payment will be strictly less than his true cost. On the other hand, if the independent set is not budget feasible, then the mechanism will update its upper bound of payment for each contribution in the next iteration. The new upper bound is  at most $bb(k)=\frac{w_k}{c_k}$. It implies element $k$ will never get a payment greater than his true cost.
\qed
\end{proof}
In the second case we show that the remaining elements that are not included in the independent set cannot benefit by manipulating their costs.

\begin{lemma}
	\label{lem:truthfulness2}
		Assume an element $k$ is in $E - \tau - T - \mathsf{MAX}(\mathcal{M}\setminus (T \cup \tau))$ when it declares its cost truthfully. Then, element $k$ could not improve his utility by declaring a cost $d_k \neq c_k$.
\end{lemma}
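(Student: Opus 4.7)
The plan is a case analysis on over- vs.\ underreporting, leveraging the fact that $w(\mathsf{MAX}(\mathcal{M}\setminus(T\cup\tau)))$ depends only on the matroid, the weights, and the deleted set $T$, not on the declared costs. Since $k$'s utility is $0$ under truthful reporting, it suffices to show that under any deviation $k$ is either placed into the new deleted set $T'$ (and hence dropped) or left out of the returned max set; either way the utility stays $0$. Let $i_0$ be the index at which the while loop terminates under the truthful profile, so that $T=\{1,\ldots,i_0-1\}$ in the sorted-by-$\mathsf{bb}$ order, and $k$'s truthful position $k_\mathrm{old}\ge i_0$.

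For an underreport $d_k'<c_k$, $k$'s $\mathsf{bb}$ strictly decreases, so $k$ drifts to a new position $h\ge k_\mathrm{old}$. I would first observe that the first $i_0-1$ slots of the new sort still contain exactly the elements at old positions $1,\ldots,i_0-1$ in the same order, so the first $i_0-1$ iterations of the loop behave identically to the truthful run. At iteration $i_0$, the element occupying that slot has $\mathsf{bb}$ at most the truthful $\mathsf{bb}(i_0)$ (it is either unchanged or has been shifted from old position $i_0+1$). Consequently the terminating test $w(\mathsf{MAX})\cdot r\le b$ still succeeds, the loop halts with $T'=T$, the max set is unchanged, and $k$ is still not in it.

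For an overreport $d_k'>c_k$, $k$'s $\mathsf{bb}$ strictly increases, so $k$ moves to an earlier position $h\le k_\mathrm{old}$. If $h>i_0$ the first $i_0$ slots of the new sort coincide with those of the truthful run, so by the same reasoning the loop halts at $i_0$ with $T'=T$ and $k$ is excluded. If $h\le i_0$, then at iteration $h$ the deleted set is $\{1,\ldots,h-1\}$ (the same elements that had been removed under truthful at that iteration, since $h-1\le i_0-1$) while the active $r$ has grown to $d_k'/w_k\ge \mathsf{bb}(h)$. A one-line comparison with the truthful iteration $h$ --- which was either still inside the loop ($h<i_0$) or the terminating one ($h=i_0$) --- shows that either $w(\mathsf{MAX})\cdot r>b$ still holds so that $k$ is pushed into $T'$, or the loop halts with $T'=T$ and $k$ remains outside the truthful max set; in both subcases $k$ receives utility $0$.

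The main obstacle will be the bookkeeping needed to rigorously align the truthful and deviated sorts --- in particular, verifying that moving $k$ to position $h$ merely inserts $k$ there and shifts the others --- and to handle tie-breaking with a fixed convention. Once that setup is in place, every sub-case reduces to comparing two $\mathsf{bb}$ values. I would also briefly note that the ``return $\tau$'' branch is inconsequential for $k$: whether Mechanism~\ref{alg:budgetedMatroids} outputs $\tau$ or the max set, both give $k$ utility $0$, so the conclusion persists.
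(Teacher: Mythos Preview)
Your proposal is correct and follows essentially the same approach as the paper's proof. The only difference is organizational: the paper partitions directly by where $k$ lands relative to the terminating iteration (before, at, or after $h$), whereas you first split on under- versus over-reporting and then on the new position $h$ versus $i_0$; the underlying comparisons of the truthful and deviated runs (same prefix of deleted elements, same $\mathsf{MAX}$ set, monotone change in $r$) are identical in both arguments.
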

\begin{proof}
	As $k \notin  \mathsf{MAX}(\mathcal{M}\setminus (T \cup \tau))$, we know that the utility of element $k$ is zero. Suppose that the mechanism terminates at the $h^{th}$ round  when element $k$ declares its cost truthfully.
	Consider that element $k$ declares a higher cost $d_k > c_k$ and it is consider at the $l^{th}$ iteration where $l < h$. In this case,  Mechanism~\ref{alg:budgetedMatroids} will not stop before or at the $l^{th}$ iteration since the maximum-value independent sets computed are exactly the same as $k$ declaring its cost truthfully and they are not budget feasible. It implies that
	element $k$ will be removed from the matroid and
	it will never be included in an independent set in Mechanism~\ref{alg:budgetedMatroids}
	by declaring a larger cost. Therefore, its utility is still zero.
	
	Secondly, consider that element $k$ declares a  cost $d_k \neq c_k$ and it is considered at the $h^{th}$ iteration. In this case, Mechanism~\ref{alg:budgetedMatroids} will not stop before the $h^{th}$ iteration because the maximum-value independent sets are not feasible.
	The maximum-value independent in the $h^{th}$ iteration is the same as $k$ declaring its cost truthfully since the remaining elements are the same. Therefore, if the independent set is budget feasible, the mechanism will compute the same independent set and payments. Otherwise, element $k$ will be removed as it must be the element with the $h^{th}$ largest buck-per-bang rate. Element $k$ cannot benefit in any case.
	
	Finally, consider that element $k$ declares a  cost $d_k \neq c_k$ and it is considered after  the $h^{th}$ iteration. In this case, Mechanism~\ref{alg:budgetedMatroids}  will compute the same independent set and payments.
\qed
\end{proof}

By similar arguments, we show that elements in the independent set cannot benefit by manipulating their costs.
\begin{lemma}
	\label{lem:truthfulnessInMatching}
		Assume an element $k$ is in $ \mathsf{MAX}(\mathcal{M}\setminus (T \cup \tau))$ when it declares its cost truthfully. Then, element $k$ could not improve his utility by declaring a cost $d_k \neq c_k$.
\end{lemma}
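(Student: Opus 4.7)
My plan is to adapt the execution-tracking style of Lemmas~\ref{lem:truthful1} and~\ref{lem:truthfulness2}, leveraging the invariant that $\mathsf{MAX}(\mathcal{M}\setminus(T\cup\tau))$ is a function of the weights alone: whenever a deviation by $k$ leaves the removed set $T$ unchanged, the selected independent set, the branch taken in the if/else, and $k$'s payment are all unchanged. Fix notation: let $h$ be the step at which the loop terminates under truthful reporting, $T=\{e_1,\ldots,e_{h-1}\}$ the resulting removed set, and $r=\min\{b/w(\mathsf{MAX}(\mathcal{M}\setminus(T\cup\tau))),\mathsf{bb}(h-1)\}$ the final rate. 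Since $k$ survives the loop, its truthful rank $p$ satisfies $p\geq h$, so $c_k/w_k=\mathsf{bb}(k)\leq\mathsf{bb}(h)\leq r$ (using the loop-exit bound $w(\mathsf{MAX})\cdot\mathsf{bb}(h)\leq b$ to conclude $b/w(\mathsf{MAX})\geq\mathsf{bb}(h)$). Hence individual rationality holds and the truthful utility $r\cdot w_k-c_k$ is non-negative, and the mechanism actually reaches the branch that selects $\mathsf{MAX}$ and pays $k$ the amount $r\cdot w_k$.

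I would then split on the direction of the deviation. For $d_k>c_k$, $k$'s rank shifts up to some $l\leq p$. If $l<h$, the first $l-1$ loop iterations mirror the truthful execution (positions $1,\ldots,l-1$ do not contain $k$ in either run), and at step $l$ the tested element is $k$ with $\mathsf{bb}=d_k/w_k\geq\mathsf{bb}(l)$; the strict inequality $w(\mathsf{MAX})\cdot\mathsf{bb}(l)>b$ that evicted $e_l$ in the truthful run now evicts $k$, giving utility $0$. If $l\geq h$, the first $h-1$ sorted positions are unchanged (none of them is $k$), so the same set $T$ is removed; the delicate subcase $l=h$ either evicts $k$ (utility $0$) or lets the loop exit with the same $T$, the same $\mathsf{MAX}$, and the same cap $\mathsf{bb}(h-1)$, yielding the same rate and the same payment $r\cdot w_k$. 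For $d_k<c_k$, $k$'s rank drops to some $p'\geq p$; the first $h-1$ positions are again untouched, and the element now occupying position $h$ has $\mathsf{bb}$ at most the truthful $\mathsf{bb}(h)$, so the loop-exit inequality persists and the loop halts at step $h$ with the same $T$. Because the cap $\mathsf{bb}(h-1)$ is read off an unchanged position and $\mathsf{MAX}$ is weight-only, $r$ and the payment to $k$ exactly match the truthful case.

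Finally, I would note that the alternate branch returning $\tau$ fires only when $w(\mathsf{MAX}(\mathcal{M}\setminus(T\cup\tau)))\leq w_\tau$, which under a deviation by $k$ can happen only if $k$ is expelled into $T$; in that subcase $k$'s utility is $0$ anyway. The main obstacle I anticipate is the careful position-chasing at the two boundary cases $l=h$ (overreporting) and $p=h$ with $p'>h$ (underreporting), where one must verify both that the element newly occupying position $h$ still satisfies the loop-exit inequality and that the cap $\mathsf{bb}(h-1)$ is read off a sorted position that is not shifted by $k$'s repositioning; both reduce to the observation that the first $p-1\geq h-1$ sorted positions are invariant under any deviation by $k$, combined with the monotonicity of $\mathsf{bb}$ along the sort.
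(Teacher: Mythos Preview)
Your proposal is correct and follows essentially the same approach as the paper's own proof, which explicitly states that the argument is identical to that of Lemma~\ref{lem:truthfulness2} and proceeds by casing on where $k$ lands relative to the termination step $h$. Your organization by direction of deviation (over- versus under-reporting) and your explicit handling of the boundary cases $l=h$ and $p=h$ are slightly more careful than the paper's write-up, but the core invariant you isolate---that the first $h-1$ sorted positions, and hence $T$, $\mathsf{MAX}$, and $r$, are unchanged unless $k$ is expelled---is exactly the mechanism the paper uses.
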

\begin{proof}
	The proof is exactly the same as Lemma~\ref{lem:truthfulness2}.
	Suppose that the mechanism terminates at the $h^{th}$ round  when element $k$ declares its cost truthfully.
	As $k \in \mathsf{MAX}(\mathcal{M}\setminus (T \cup \tau))$, we know the payment of element $k$ is $r \cdot w_k$. Hence, his utility is $r \cdot w_k - c_k$.
	Consider that element $k$ declares a higher cost $d_k > c_k$ and it is considered at the $l^{th}$ iteration where $l < h$. Similar to Lemma~\ref{lem:truthfulness2},  in this case Mechanism~\ref{alg:budgetedMatroids} will not stop before or at the $l^{th}$ iteration since the maximum-value independent sets computed are exactly the same and they are not budget feasible. It implies that
	element $k$ will be removed from the matroid and
	it will never be included in an independent set in Mechanism~\ref{alg:budgetedMatroids}. Therefore, its utility becomes zero.
	
	Secondly, consider that element $k$ declares a  cost $d_k \neq c_k$ and it is considered at the $h^{th}$ iteration. In this case, Mechanism~\ref{alg:budgetedMatroids} will not stop before the $h^{th}$ iteration because the maximum-value independent sets are not feasible.
	The maximum-value independent in the $h^{th}$ round is the same as $k$ declaring its cost truthfully since the remaining elements are the same. Therefore, if the independent set is budget feasible, the mechanism will compute the same independent set and payments. Otherwise, element $k$ will be removed. Element $k$ cannot benefit in any case.
	
	Finally, consider that element $k$ declares a  cost $d_k \neq c_k$ and it is considered after  the $h^{th}$ iteration. In this case, Mechanism~\ref{alg:budgetedMatroids}  will compute the same independent set and payments.
\qed
\end{proof}

\subsection{Individual Rationality}
When Mechanism~\ref{alg:budgetedMatroids} returns $\tau$, the utility of $\tau$ is non-negative as $c_\tau$ is at most $b$. The utilities for other edges are zero. When Mechanism~\ref{alg:budgetedMatroids} returns $\mathsf{MAX}(\mathcal{M} \setminus (T\cup \tau))$, for any element $e \in \mathsf{MAX}(\mathcal{M} \setminus (T\cup \tau))$, that is, $f_e=1$, its utility is $r \cdot w_e- c_e$ which is non-negative since $r \geq \mathsf{bb}(e)$. For other edges, their utilities are zero.

\subsection{Budget Feasibility}
When Mechanism~\ref{alg:budgetedMatroids} returns $\tau$, it only pays $b$ to edge $\tau$. Hence, it is budget feasible. On the other hand, when Mechanism~\ref{alg:budgetedMatroids} returns $\mathsf{MAX}(\mathcal{M} \setminus (T\cup \tau))$, $r$ is used as payment per contribution. As $r = \min\{ \frac{b}{w(\mathsf{MAX}(\mathcal{M} \setminus (T\cup \tau)))},  \mathsf{bb}(i-1)\}$, it guarantees the budget feasibility.

\subsection{Remarks}
In Mechanism~\ref{alg:budgetedMatroids}, we iteratively compute the maximum-value independent set (e.g. Line 4). In the case that the maximum-value independent set is not unique, we assume there is a deterministic tie-breaking rule. Note that all the results still hold under this assumption. For example, the truthfulness of the mechanism will not be compromised since the the maximum-value independent set only consider the weights of the elements that is the public knowledge.

\section{Mechanisms for matroid intersections}\label{sec_Mechanisms for matroid intersections}
In this section we extend our mechanism to matroid intersections. The matroid intersection problem (i.e., finding the maximum-value common independent set) is NP-hard in general when more than three matroids are involved. Some interesting cases of matroid intersection problems can be solved efficiently (i.e., they can be formulated as the intersection of two matroids), for example, matchings in bipartite graphs, arborescences in directed graphs, spanning forests in undirected graphs, etc. Nevertheless we point out that a very similar mechanism to the one presented in last section achieves a $4$ approximation for the case when, instead of a matroid, we are given an undirected weighted (general) graph where the selfish agents are the edges of the graph and the buyer wants to procure a matching under the given budget that yields the largest value possible to him. 

For general matroid intersections, our main result is the following. Given a deterministic polynomial time blackbox $\mathsf{APX}$ that achieves an $\alpha$-approximation to $k$-matroid intersection problems, we provide a polynomial time, individually rational, truthful and budget feasible deterministic  mechanism that is $(3\alpha+1)$-competitive against the maximum-value common independent set. The mechanism is similar to Mechanism~\ref{alg:budgetedMatroids} by changing $\mathsf{MAX}$ to $\mathsf{APX}$. It is well-known that the VCG payment rule does not preserve the property of truthfulness in the presence of approximated solutions (i.e., non-optimal outcome). However unlike the VCG mechanism, we show that Mechanism~\ref{alg:matroidsIntersection} preserves its truthfulness when $\mathsf{APX}$ is used. We believe that this result will make our contribution more practical. 

\begin{algorithm}[h]
	\KwIn{$\mathcal{M}=(E,\mathcal{I}), \mathbf{w}, \mathbf{d},b$}
	\KwOut{$\mathbf{f},\mathbf{p}$}
	Sort elements in $E - \tau$ in a non-increasing order of buck per bang, i.e. $\mathsf{bb}(i) \geq \mathsf{bb}(j)$ if $ i < j$, break ties arbitrarily\;
	Let $\mathsf{bb}(0) = +\infty$, $i=1$ and $T = \emptyset$\; 
	Set $r = \mathsf{bb}(i)$\;
	\While{$w(\mathsf{APX}(\mathcal{M} \setminus T))\cdot r > b$}{
		$T = T \cup \{i\}$ and $i = i+1$\;
	}
	$r = \min\{ \frac{b}{w(\mathsf{APX}(\mathcal{M} \setminus T))},  \mathsf{bb}(i-1)\}$\label{line:HatRBudgetmatroidintersection}\;
	\uIf{$w(\mathsf{APX}(\mathcal{M} \setminus T))>w_\tau$}{
		For each $e \in E$, if $e \in \mathsf{APX}(\mathcal{M} \setminus T), f_e = 1$ and $ p_e = r \cdot w_k$. Otherwise, $f_e = 0$ and $ p_e = 0$\;
	}\Else{
	$f_\tau = 1, p_\tau = b$. For edge $e \in E - \tau, f_e = 0, p_e = 0$\;
}
\KwRet{$\mathbf{f}, \mathbf{p}$}\;
\SetAlgorithmName{Mechanism}{mechanism}{List of Mechanisms}
\caption{A budget feasible mechanism for procuring independent sets in matroid intersections}
\label{alg:matroidsIntersection}
\end{algorithm}

\subsection{Matroid intersections}
Given $k$-matroid $\mathcal{M}_1, \ldots, \mathcal{M}_k$, let $\mathcal{M} = (E, \mathcal{I} )$ be the``true matroid" where $E$ is the common ground elements and $\mathcal{I} = \bigcap_{j} \mathcal{I}_j$ is the ``true independent sets". 
Similar as the notations we used before, let $\mathsf{OPT}(\mathcal{M \setminus T},b)$ and $\mathsf{OPT}(\mathcal{M | T},b)$ denote the optimal independent set satisfying the budget constraint in matroid $\mathcal{M}\setminus T$ and $\mathcal{M} | T$, respectively. Let $\mathsf{APX}(\mathcal{M \setminus T},b)$ be the maximum-value independent set in matroid  $\mathcal{M} \setminus T$ returned by the $\alpha$-approximation algorithm.

\subsection{Obtaining $O(\alpha)$ approximation}
We show the following key lemma, which is similar to Lemma~\ref{lem:approximaiton} and implies the approximation of our mechanism for matroid intersections.

\begin{lemma}
	\label{lem:matroidintersection}
	Given any $\mathcal{M}, \mathbf{w}, \mathbf{d}, b$, when Mechanism~\ref{alg:matroidsIntersection} stops, it holds
	\[
	w(\mathsf{OPT}(\mathcal{M} \setminus  \tau, b)) \leq 2 \cdot \alpha \cdot w(\mathsf{APX}(\mathcal{M}\setminus (T \cup \tau) )) + \alpha \cdot w_\tau
	\]
\end{lemma}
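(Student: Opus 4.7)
The plan is to mirror the proof of Lemma~\ref{lem:approximaiton} as closely as possible, substituting $\mathsf{APX}$ for $\mathsf{MAX}$ and paying the $\alpha$-approximation penalty at exactly the place where we compare an optimal subset to what the algorithm actually finds. First I would set up the same decomposition of the ground set: partition $E - \{\tau\}$ into $T$ and $E - \{\tau\} - T$, and use subadditivity over the partition together with the hereditary property to obtain
\[
w(\mathsf{OPT}(\mathcal{M}\setminus\tau, b)) \;\leq\; w(\mathsf{OPT}(\mathcal{M}\mid T, b)) \;+\; w(\mathsf{OPT}(\mathcal{M}\setminus(T\cup\tau), b)).
\]

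Next I would bound each summand. The first term is handled exactly as before: every element in $T$ has buck-per-bang at least $r$, so any independent set in $\mathcal{M}\mid T$ paying at most $b$ has weight at most $b/r$. The second term is where the $\alpha$-approximation enters. Since $\mathsf{OPT}(\mathcal{M}\setminus(T\cup\tau), b)$ is a feasible common independent set in $\mathcal{M}\setminus(T\cup\tau)$, its weight is at most $w(\mathsf{MAX}(\mathcal{M}\setminus(T\cup\tau)))$, and by the blackbox guarantee this is at most $\alpha\cdot w(\mathsf{APX}(\mathcal{M}\setminus(T\cup\tau)))$. Thus
\[
w(\mathsf{OPT}(\mathcal{M}\setminus\tau, b)) \;\leq\; \frac{b}{r} \;+\; \alpha\cdot w(\mathsf{APX}(\mathcal{M}\setminus(T\cup\tau))).
\]

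It remains to bound $b/r$ in terms of $w(\mathsf{APX}(\mathcal{M}\setminus(T\cup\tau)))$, and here I would split into the same two cases as in Lemma~\ref{lem:approximaiton}. In the full-budget case we have $r = b/w(\mathsf{APX}(\mathcal{M}\setminus(T\cup\tau)))$ on Line~\ref{line:HatRBudgetmatroidintersection}, so $b/r$ equals $w(\mathsf{APX}(\mathcal{M}\setminus(T\cup\tau)))$ and the bound becomes $(1+\alpha)\cdot w(\mathsf{APX}(\mathcal{M}\setminus(T\cup\tau)))$, which is at most $2\alpha\cdot w(\mathsf{APX}(\mathcal{M}\setminus(T\cup\tau)))$ because $\alpha \geq 1$. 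In the leftover-budget case, $r = \mathsf{bb}(i-1)$ and the fact that the while loop did not terminate at the previous iteration yields
\[
w(\mathsf{APX}(\mathcal{M}\setminus(T'\cup\tau)))\cdot \mathsf{bb}(i-1) \;>\; b \;>\; w(\mathsf{APX}(\mathcal{M}\setminus(T\cup\tau)))\cdot \mathsf{bb}(i-1),
\]
with $T' = T - \{i-1\}$; this forces $b/r \leq w(\mathsf{APX}(\mathcal{M}\setminus(T\cup\tau))) + w_{i-1}$. Combining and using $w_{i-1} \leq w_\tau \leq \alpha w_\tau$ gives the stated bound.

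The main conceptual obstacle is a subtle one: in the matroid-intersection setting, $\mathsf{APX}$ only approximates the unconstrained maximum-weight common independent set, so one must be careful that the two places where a ``MAX-like'' quantity appears — once as an upper bound on the budget-constrained optimum $\mathsf{OPT}(\mathcal{M}\setminus(T\cup\tau), b)$, and once as what the mechanism actually computes — are handled consistently. The factor of $\alpha$ is only lost in the former comparison, not in the latter, which is why the final bound has $\alpha$ multiplying the $w(\mathsf{APX})$ term and $w_\tau$ but not an extra factor in front of $b/r$. The routine case analysis of the budget condition carries over verbatim from the single-matroid proof.
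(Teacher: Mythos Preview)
Your argument tracks the single-matroid proof correctly up through the decomposition and the full-budget case, but there is a real gap in the leftover-budget case. From the displayed chain
\[
w(\mathsf{APX}(\mathcal{M}\setminus(T'\cup\tau)))\cdot \mathsf{bb}(i-1) \;>\; b \;>\; w(\mathsf{APX}(\mathcal{M}\setminus(T\cup\tau)))\cdot \mathsf{bb}(i-1)
\]
you only obtain $b/r < w(\mathsf{APX}(\mathcal{M}\setminus(T'\cup\tau)))$. To reach your asserted conclusion $b/r \leq w(\mathsf{APX}(\mathcal{M}\setminus(T\cup\tau))) + w_{i-1}$ you would need
\[
w(\mathsf{APX}(\mathcal{M}\setminus(T'\cup\tau))) \;\leq\; w(\mathsf{APX}(\mathcal{M}\setminus(T\cup\tau))) + w_{i-1},
\]
i.e., that deleting a single element from the instance decreases the blackbox's output value by at most that element's weight. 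This holds for $\mathsf{MAX}$ (remove $i-1$ from the optimal set and you still have a feasible set in the smaller instance), and that is exactly what Lemma~\ref{lem:approximaiton} uses. It does \emph{not} hold for an arbitrary deterministic $\alpha$-approximation blackbox: $\mathsf{APX}$ may return completely unrelated solutions on the two instances, and its value can drop by far more than $w_{i-1}$ while still being an $\alpha$-approximation on each.

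The paper's fix is to route the comparison through $\mathsf{MAX}$: since $w(\mathsf{MAX}(\mathcal{M}\setminus(T'\cup\tau))) \geq w(\mathsf{APX}(\mathcal{M}\setminus(T'\cup\tau))) > b/\mathsf{bb}(i-1)$ and $w(\mathsf{APX}(\mathcal{M}\setminus(T\cup\tau))) + w_{i-1} \geq \tfrac{1}{\alpha}\,w(\mathsf{MAX}(\mathcal{M}\setminus(T\cup\tau))) + w_{i-1} \geq \tfrac{1}{\alpha}\,w(\mathsf{MAX}(\mathcal{M}\setminus(T'\cup\tau)))$, one gets only the weaker bound $b/r < \alpha\bigl(w(\mathsf{APX}(\mathcal{M}\setminus(T\cup\tau))) + w_{i-1}\bigr)$. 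That extra factor of $\alpha$ on the $b/r$ term is precisely what produces the $2\alpha$ and the $\alpha w_\tau$ in the statement; your sharper claim, if it held, would actually yield $(\alpha+1)\,w(\mathsf{APX}) + w_\tau$, which is stronger than what is being proved and not obtainable without an unwarranted stability assumption on the blackbox.
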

\begin{proof}
	The proof has the same spirit as the proof of  Lemma~\ref{lem:approximaiton}. We consider two cases depending on whether the full budget $b$ is spent or not. 	Consider $E - \{\tau\} $ is partitioned into two disjoint sets, $E - \{\tau\} - T$ and ${T}$.
	Similar to Lemma~\ref{lem:approximaiton}, when the full budget is spent, we get
	\begin{align*}
	 w(\mathsf{OPT}(\mathcal{M} \setminus  \tau, b))  
	\leq &  w(\mathsf{OPT}(\mathcal{M} | T, b)) +  w(\mathsf{OPT}(\mathcal{M} \setminus (T \cup \tau)    ,b)) \\
	\leq & \frac{b}{r} + \alpha \cdot w(\mathsf{APX}(\mathcal{M}\setminus (T \cup \tau) )) \\
	\leq & (\alpha + 1) \cdot  w(\mathsf{APX}(\mathcal{M}\setminus (T \cup \tau) ))
	\end{align*}
	When there is some budget left in Mechanism~\ref{alg:matroidsIntersection}, the analysis involves one more step compared to Lemma~\ref{lem:approximaiton} although the idea is still to bound the budget left. Since Mechanism~\ref{alg:matroidsIntersection} does not stop when $r =\mathsf{bb}(i-1)$, it implies that the independent set returned by $\mathsf{APX}$ was not budget feasible at previous iteration. It further implies that the maximum-value independent set is not budget feasible either if the payment per weight is $r$. After removing element $i-1$, the independent set returned by $\mathsf{APX}$ becomes budget feasible when $r= \mathsf{bb}(e_{i-1})$. These together imply
	\[
	w(\mathsf{MAX}(\mathcal{M} \setminus (T' \cup \tau) )) \cdot  \mathsf{bb}(i-1)  \geq 	w(\mathsf{APX}(\mathcal{M} \setminus (T' \cup \tau) )) \cdot  \mathsf{bb}(i-1) > b > w(\mathsf{APX}(\mathcal{M} \setminus (T \cup \tau) ))  \cdot \mathsf{bb}(i-1)
	\]
	where $T' = T - \{i-1\}$. As the sum of $w(\mathsf{APX}(\mathcal{M} \setminus (T \cup \tau) ))$ and $w(i-1)$ is at least $\frac{1}{\alpha}$ fraction of $w(\mathsf{MAX}(\mathcal{M} \setminus (T' \cup \tau) ))$, we get
	\[
	\big( w(\mathsf{APX}(\mathcal{M} \setminus (T \cup \tau) ) + w_{i-1} \big) \cdot \mathsf{bb}(i-1)  
	\geq  \frac{1}{\alpha} \cdot w(\mathsf{MAX}(\mathcal{M} \setminus (T' \cup \tau) )) \cdot \mathsf{bb}(i-1)  
	>  \frac{b}{\alpha}
	\]
	Hence, we get $w(\mathsf{APX}(\mathcal{M} \setminus (T \cup \tau) )  +  w_{i-1}> \frac{b}{\alpha \cdot \mathsf{bb}(i-1)} $. Finally, 
	\begin{align*}
	\mathsf{OPT}(\mathcal{M} \setminus \tau, b) 
	\leq &  w(\mathsf{OPT}(\mathcal{M} | T, b)) +  w(\mathsf{OPT}(\mathcal{M} \setminus (T \cup \tau)    ,b)) \\
	\leq &\frac{b}{\mathsf{bb}(i-1)} + \alpha \cdot w(\mathsf{APX}(\mathcal{M} \setminus (T \cup \tau) \\
	\leq & 2 \cdot \alpha \cdot w(\mathsf{APX}(\mathcal{M} \setminus (T \cup \tau)  + \alpha \cdot w_{i-1}
	\end{align*}
	Substituting $w_{i-1}$ with $w_\tau$ completes the proof.
\qed
\end{proof}

Now, we show the competitive ratio of Mechanism~\ref{alg:matroidsIntersection}

\begin{lemma}
	Given any $\mathcal{M}, \mathbf{w}, \mathbf{d}, b$, the independent set returned by Mechanism~\ref{alg:matroidsIntersection}, i.e., the maximum between $\tau$ or $\mathsf{APX}(\mathcal{M}\setminus (T \cup \tau) )$, is $4\alpha-$competitive against the optimal independent set.
\end{lemma}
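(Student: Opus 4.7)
The plan is to mirror the corresponding argument for Mechanism~\ref{alg:budgetedMatroids}, substituting Lemma~\ref{lem:matroidintersection} in place of Lemma~\ref{lem:approximaiton}. First, I would decompose the optimum on the full matroid by splitting off the heaviest element $\tau$:
\[
w(\mathsf{OPT}(\mathcal{M},b)) \;\le\; w_\tau + w(\mathsf{OPT}(\mathcal{M}\setminus\tau,b)),
\]
which holds simply because discarding $\tau$ from the optimal set leaves an independent set of $\mathcal{M}\setminus\tau$ whose cost is still within budget.

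Next, I would invoke Lemma~\ref{lem:matroidintersection} directly on the right-hand term, yielding
\[
w(\mathsf{OPT}(\mathcal{M},b)) \;\le\; w_\tau + 2\alpha\cdot w(\mathsf{APX}(\mathcal{M}\setminus(T\cup\tau))) + \alpha\cdot w_\tau \;=\; (\alpha+1)w_\tau + 2\alpha\cdot w(\mathsf{APX}(\mathcal{M}\setminus(T\cup\tau))).
\]
Since Mechanism~\ref{alg:matroidsIntersection} returns the better of $\{\tau\}$ and $\mathsf{APX}(\mathcal{M}\setminus(T\cup\tau))$, writing $W$ for the value of its output gives $w_\tau\le W$ and $w(\mathsf{APX}(\mathcal{M}\setminus(T\cup\tau)))\le W$, so
\[
w(\mathsf{OPT}(\mathcal{M},b)) \;\le\; (\alpha+1+2\alpha)\,W \;=\; (3\alpha+1)\,W.
\]
Finally, since any non-trivial approximation blackbox has $\alpha\ge 1$, we have $3\alpha+1\le 4\alpha$, concluding the $4\alpha$-competitive bound claimed in the lemma (and in fact giving the sharper $(3\alpha+1)$ bound advertised in the introduction).

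There is essentially no obstacle here: all the work is front-loaded into Lemma~\ref{lem:matroidintersection}, whose two-case analysis (full budget spent versus leftover budget bounded by $w_{i-1}\le w_\tau$) already absorbed the $\alpha$-loss coming from replacing $\mathsf{MAX}$ by $\mathsf{APX}$. The only subtlety worth double-checking is that the comparison $w(\mathsf{APX}(\mathcal{M}\setminus(T\cup\tau)))>w_\tau$ in the if-branch of the mechanism ensures $W=\max\{w_\tau,w(\mathsf{APX}(\mathcal{M}\setminus(T\cup\tau)))\}$ is indeed the value of the output, which is immediate from the algorithm's last step.
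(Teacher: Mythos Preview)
Your proof is correct and follows essentially the same approach as the paper's own argument: decompose $w(\mathsf{OPT}(\mathcal{M},b))\le w_\tau+w(\mathsf{OPT}(\mathcal{M}\setminus\tau,b))$, apply Lemma~\ref{lem:matroidintersection}, and bound by the maximum of the two candidate outputs to obtain $3\alpha+1$. If anything, your write-up is slightly more careful than the paper's, which (by an evident typo) cites Lemma~\ref{lem:approximaiton} rather than Lemma~\ref{lem:matroidintersection} and stops at $3\alpha+1$ without explicitly noting, as you do, that $\alpha\ge1$ yields the $4\alpha$ bound stated in the lemma.
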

\begin{proof}
	The optimal independent set in $\mathcal{M}$ is bounded by
	\[
	w(\mathsf{OPT}(\mathcal{M},b)) \leq w_\tau + w(\mathsf{OPT}(\mathcal{M} \setminus  \tau, b))
	\]
	By Lemma~\ref{lem:approximaiton}, we have
	\[
	w(\mathsf{OPT}(\mathcal{M},b)) \leq (\alpha+1)w_\tau + 2\cdot \alpha \cdot w(\mathsf{APX}(\mathcal{M}\setminus (T \cup \tau) ))
	\]
	Therefore, the maximum between $\tau$ and $\mathsf{APX}(\mathcal{M}\setminus (T \cup \tau) )$ approximates the optimal independent set within a factor of $3\alpha+1$.
\qed
\end{proof}

\subsection{Preserving the truthfulness}
In this section, we will show that replacing $\mathsf{MAX}$ by $\mathsf{APX}$ preserve the truthfulness of the mechanism for matroid intersections. The reason behind is that the mechanism works in a greedy fashion and at each iteration the cost declared by elements \textit{does not} affect the independent set computed in the mechanism. The property of the truthfulness replies on the greedy approach instead of the optimality of the independent set. Informally speaking, if an element declares a cost rather than its true cost, its utility will remain the same or it will get removed. The proofs are similar to the proofs in Section~\ref{sec:truthfulmatroid}. 

\begin{lemma}
	\label{lem:truthfulmatroidintersection1}
	Assume an element $k \in T$ when it declares its cost truthfully. Then, element $k$ could not improve his utility by declaring a cost $d_k \neq c_k$.
\end{lemma}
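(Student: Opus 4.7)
The overall plan is to lift the proof of Lemma~\ref{lem:truthful1} to the matroid-intersection setting, relying on the fact that $\mathsf{APX}$, being a deterministic algorithm that inspects only the matroid and the public weights, returns exactly the same independent set whenever the remaining ground set after deletion is the same. Hence two executions of Mechanism~\ref{alg:matroidsIntersection} that arrive at identical removal sets $T$ observe identical values of $w(\mathsf{APX}(\mathcal{M}\setminus T))$, and the argument does not require $\mathsf{APX}$ to be optimal or monotone. I split on the sign of $d_k-c_k$; let $h$ denote $k$'s position in the lying sort and recall that $k$'s truthful position is $k$.

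Suppose $d_k>c_k$, so $h\leq k$. Since only $k$'s buck-per-bang changes, the first $h-1$ elements of the two sorted orders agree, and iterations $1,\dots,h-1$ of the two executions evolve identically; in particular they share the same removal set $T^{h-1}$ and the same value $w(\mathsf{APX}(\mathcal{M}\setminus T^{h-1}))$. At iteration $h$ of the lying run, $k$ is tested with $r=d_k/w_k$, which is at least the bb of the element that sat at truthful position $h$ (that element is displaced by $k$'s ascent to lying position $h+1$ and hence has bb no larger than $d_k/w_k$). Because $k$ is truthfully removed and $h\leq k$, the truthful iteration $h$ passed its removal test; multiplying the same $w(\mathsf{APX}(\mathcal{M}\setminus T^{h-1}))$ by the no-smaller value $d_k/w_k$ preserves the inequality, so $k$ is also deleted in the lying run and its utility is $0$.

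The case $d_k<c_k$ is the delicate one: $h\geq k$ and, after iteration $k$, the two executions diverge, so their $\mathsf{APX}$ values are no longer directly comparable. Here I do not try to show that $k$ is still removed; instead I argue that if $k$ survives into the output then the final price $r$ satisfies $r\leq c_k/w_k$, so the payment is at most $c_k$ and the utility is non-positive. Iterations $1,\dots,k-1$ still coincide. If the lying loop continues past position $h$ then $k\in T_{\mathrm{lie}}$ and we are done, so let $j\in[k,h]$ denote the iteration at which the loop terminates. The final price in line~\ref{line:HatRBudgetmatroidintersection} is $r=\min\{b/w(\mathsf{APX}(\mathcal{M}\setminus T^{j-1})),\ \mathsf{bb}(j-1)\}$. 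For $j\geq k+1$, I bound $r$ by the second term: in the lying sort the element at position $k$ has bb equal to either $c_{k+1}/w_{k+1}$ (when $h>k$) or $d_k/w_k$ (when $h=k$), both at most $c_k/w_k$, and the non-increasing sort gives $\mathsf{bb}(j-1)\leq \mathsf{bb}(k)\leq c_k/w_k$ for $j-1\geq k$. For $j=k$, I bound $r$ by the first term: since $T^{k-1}_{\mathrm{lie}}=T^{k-1}_{\mathrm{truth}}$ and the truthful execution did not terminate at iteration $k$ (because $k$ was ultimately removed), its removal condition yields $b/w(\mathsf{APX}(\mathcal{M}\setminus T^{k-1}_{\mathrm{truth}}))<c_k/w_k$. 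In either subcase $r\leq c_k/w_k$, so $p_k\leq c_k$ and $k$'s utility is at most $0$.

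The main obstacle is precisely the absence of monotonicity (or any subset consistency) for $\mathsf{APX}$: we cannot compare $w(\mathsf{APX}(\mathcal{M}\setminus T))$ between two runs once their removal sets diverge. The proof is therefore engineered so that every inequality placed on $\mathsf{APX}$ values compares identical matroids, while all remaining slack is absorbed through the $\mathsf{bb}(j-1)$ side of the minimum in line~\ref{line:HatRBudgetmatroidintersection}.
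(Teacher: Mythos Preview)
Your proof is correct and takes essentially the same approach as the paper: both argue that for $d_k>c_k$ the first $h-1$ iterations coincide and the no-smaller rate $d_k/w_k$ forces $k$'s removal at iteration $h$, and for $d_k<c_k$ both split on whether the lying loop terminates at iteration $k$ (bounding the final $r$ via the first term $b/w(\mathsf{APX})$, using that the truthful run failed the test there) or strictly later (bounding $r$ via the second term $\mathsf{bb}(j-1)\le c_k/w_k$). Your case analysis is more explicit than the paper's---in particular about why $\mathsf{bb}_{\mathrm{lie}}(j-1)\le c_k/w_k$ and about the fact that the argument only ever compares $\mathsf{APX}$ on identical deletion sets---but the underlying argument is the same.
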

\begin{proof}
	As $k \in T$, we know that element $k$ is not in the independent set returned by Mechanism~\ref{alg:matroidsIntersection}. Hence, his utility is zero. It
	implies, when element $k$ is considered, that is, $r = \mathsf{bb}(k)$, we get $w(\mathsf{APX}(\mathcal{M} \setminus (T^{k} \cup \tau)  )  )) \cdot r > b$ where $T^k$ denotes the set of elements removed until $k$ is considered. Consider that element $k$ declares a higher cost $d_k > c_k$ and it is considered earlier at the $h^{th}$ iteration where $h\leq k$. Equivalently speaking,  $k$ becomes the element with the $h^{th}$ largest buck-per-bang rate.  In this case,  Mechanism~\ref{alg:matroidsIntersection} will not stop until $k$ is considered as the independent sets computed in $\mathsf{APX}$ are the same as $k$ declaring its cost truthfully. Moreover, the independent set is also the same in the $h^{th}$ iteration as the remaining elements are the same. 
	As $r$ in the $h^{th}$ is equal to or greater than before, the independent set is not budget feasible. It implies that element $k$ will be removed from the matroid. 
	It concludes that
	element $k$ will never be included in an independent set in Mechanism~\ref{alg:budgetedMatroids}. Therefore, its utility is still zero. 
	
	On the other hand, consider  that element $k$ declares a smaller cost $d_k < c_k$ and it is considered at the $h^{th}$ iteration where $h \geq k$. Let us focus on how Mechanism~\ref{alg:matroidsIntersection} performs. Until the $k^{th}$ iteration, the independent sets computed in $\mathsf{APX}$ are the same as $k$ declaring its cost truthfully and they are not budget feasible. Next, the independent set in the $k^{th}$ iteration is the same as before. If the independent set is budget feasible, then we know that $r$ is strictly less than $bb(k)=\frac{w_k}{c_k}$.  It is because that the mechanism does not terminate at $r = \frac{w_k}{c_k}$ when element $k$ declares truthfully. Therefore, even element $k$ is in this independent set, the payment will be strictly less than his true cost. On the other hand, if the independent set is not budget feasible, the mechanism will update its upper bound of payment. The new upper bound is at most $bb(k)=\frac{w_k}{c_k}$. It implies element $k$ will never get a payment greater than his true cost. 
\qed
\end{proof}

\begin{lemma}
	\label{lem:truthfulness2matoidintersection}
	Assume an element $k$ is in $E - \tau - T - \mathsf{APX}(\mathcal{M}\setminus (T \cup \tau))$ when it declares its cost truthfully. Then, element $k$ could not improve his utility by declaring a cost $d_k \neq c_k$.
\end{lemma}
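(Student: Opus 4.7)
The plan is to reproduce verbatim the three-case analysis of Lemma~\ref{lem:truthfulness2}, substituting $\mathsf{APX}$ for $\mathsf{MAX}$ throughout. The single ingredient that makes this substitution sound is the observation, explicit in the description of Mechanism~\ref{alg:matroidsIntersection}, that $\mathsf{APX}$ is a deterministic $\alpha$-approximation procedure whose inputs are the matroid (after appropriate restrictions or deletions) and the public weights only; it never consults the declared cost vector $\mathbf{d}$. Consequently, for any fixed subset $S$ of remaining ground elements the set $\mathsf{APX}(\mathcal{M}\setminus S)$ is the same regardless of $k$'s declaration, and the declared cost of $k$ can affect the execution of Mechanism~\ref{alg:matroidsIntersection} only through (i) $k$'s position in the non-increasing buck-per-bang ordering and (ii) the price $r$ used at the iteration when $k$ is inspected.

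Let $h$ denote the iteration at which Mechanism~\ref{alg:matroidsIntersection} terminates when $k$ reports truthfully; by hypothesis the utility of $k$ is then zero. I would then split on where $k$ lands in the ordering after the deviation. First, if $d_k>c_k$ pushes $k$ to some rank $l<h$, the iterations before $l$ proceed identically (the ranks above $l$ are occupied by the same elements as in the truthful run), and at iteration $l$ the new price $d_k/w_k$ strictly exceeds the truthful price at rank $l$; since the truthful price was already too small to make the APX set budget-feasible (the loop continued), the larger price does no better, so $k$ is removed and its utility is zero. Second, if the deviation places $k$ exactly at rank $h$, the first $h-1$ iterations are unchanged, $\mathsf{APX}(\mathcal{M}\setminus T^{h})$ is the same set as in the truthful run (and therefore still does not contain $k$), and so $k$ is either left out of the final independent set or removed for infeasibility; either way the utility is zero. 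Third, if $k$ lands at some rank after $h$, all events through iteration $h$ match the truthful run and the mechanism terminates without ever touching $k$, so the outcome is the same.

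The only step that deserves a second look is the claim that $\mathsf{APX}$ is oblivious to $\mathbf{d}$; this is simply the definition of the blackbox as an approximation algorithm for an optimization problem whose instance data are the matroid and the weights. I do not expect a serious obstacle, because the lemma asks only that zero utility be preserved, which is easier than the analogous price-manipulation argument required for elements that are actually in the returned independent set (that case is Lemma~\ref{lem:truthfulnessInMatching}, and relies on the same oblivious-$\mathsf{APX}$ observation combined with the fact that $r$ at termination is an upper bound determined independently of $d_k$).
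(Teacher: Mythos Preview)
Your proposal is correct and follows essentially the same three-case decomposition as the paper's own proof of Lemma~\ref{lem:truthfulness2matoidintersection}, which is indeed a direct transcription of Lemma~\ref{lem:truthfulness2} with $\mathsf{APX}$ in place of $\mathsf{MAX}$; you have also made explicit the crucial point that $\mathsf{APX}$ reads only the matroid and the public weights, which is exactly what the paper relies on. One small wording slip: in your first case you write that ``the truthful price was already too small to make the APX set budget-feasible,'' but the direction is reversed---the truthful price $\mathsf{bb}(l)$ was too \emph{large} (so $w(\mathsf{APX})\cdot \mathsf{bb}(l)>b$), and hence the still larger price $d_k/w_k\ge \mathsf{bb}(l)$ keeps the product above $b$; your conclusion that $k$ is removed is nonetheless correct.
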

\begin{proof}
	As $k \notin  \mathsf{APX}(\mathcal{M}\setminus (T \cup \tau))$, we know that the utility of element $k$ is zero. Suppose that the mechanism terminates at the $h^{th}$ round  when element $k$ declares its cost truthfully.  
	Consider that element $k$ declares a higher cost $d_k > c_k$ and it is consider at the $l^{th}$ iteration where $l < h$. In this case,  Mechanism~\ref{alg:matroidsIntersection} will not stop before or at the $l^{th}$ iteration since the independent sets computed in $\mathsf{APX}$ are exactly the same as $k$ declaring its cost truthfully and they are not budget feasible. It implies that
	element $k$ will be removed from the matroid and 
	it will never be included in an independent set in Mechanism~\ref{alg:matroidsIntersection}. Therefore, its utility is still zero. 
	
	Secondly, consider that element $k$ declares a  cost $d_k \neq c_k$ and it is considered at the $h^{th}$ iteration. In this case, Mechanism~\ref{alg:matroidsIntersection} will not stop before the $h^{th}$ iteration because the  independent sets are not feasible. 
	The independent in the $h^{th}$ iteration is the same as $k$ declaring its cost truthfully since the remaining elements are the same. Therefore, if the independent set is budget feasible, the mechanism will compute the same independent set and payments. Otherwise, element $k$ will be removed as it must be the element with the $h^{th}$ largest buck-per-bang rate. Element $k$ cannot benefit in any case. 
	
	Finally, consider that element $k$ declares a  cost $d_k \neq c_k$ and it is considered after  the $h^{th}$ iteration. In this case, Mechanism~\ref{alg:matroidsIntersection}  will compute the same independent set and payments. 
\qed
\end{proof}

\begin{lemma}
	The element with the maximum weight, i.e., element $\tau$, could not improve his utility by declaring a cost $d_\tau \neq c_\tau$.
\end{lemma}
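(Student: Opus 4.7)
The plan is to exploit a syntactic property of Mechanism~\ref{alg:matroidsIntersection}: the declared cost $d_\tau$ does not enter the computation at any point, so whatever $\tau$ reports, the allocation and payment it receives are identical. First I would walk through each line of the mechanism and confirm this. The sort in the first line ranges over $E - \tau$, so $\tau$ is never placed in the buck-per-bang ordering and its declared cost is not consulted there. Since the set $T$ is built by successively taking indices from this sorted list, we have $T \subseteq E - \tau$. The loop guard $w(\mathsf{APX}(\mathcal{M}\setminus (T\cup \tau)))\cdot r > b$, the update of $r$ on Line~\ref{line:HatRBudgetmatroidintersection}, and the branching test $w(\mathsf{APX}(\mathcal{M}\setminus(T\cup \tau))) > w_\tau$ all depend only on public weights and on the matroid (intersection) structure, not on $d_\tau$. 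Finally, when $\tau$ is selected as output its payment is hard-coded to $b$, and $\tau$ never appears in $\mathsf{APX}(\mathcal{M}\setminus(T\cup\tau))$ by construction.

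Given this observation, the remainder mirrors the corresponding lemma for Mechanism~\ref{alg:budgetedMatroids}. I would split into two cases according to what the mechanism returns under truthful reporting. In the first case the output is $\tau$ with payment $b$; since $c_\tau \leq b$ by individual rationality of the assumed input, the utility $b - c_\tau$ is non-negative, and because no line of the mechanism reads $d_\tau$, reporting any $d_\tau \neq c_\tau$ yields the identical outcome, hence the identical utility. In the second case the output is $\mathsf{APX}(\mathcal{M}\setminus(T\cup \tau))$, which excludes $\tau$, so the utility is zero; again the decision is independent of $d_\tau$, so $\tau$ cannot put itself back into the selected set by misreporting.

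The only subtlety worth stressing is that replacing the exact oracle $\mathsf{MAX}$ by the approximation blackbox $\mathsf{APX}$ does not introduce any new dependence on $d_\tau$. Since $\mathsf{APX}$ is a deterministic algorithm operating on the matroid-intersection instance $(\mathcal{M}\setminus(T\cup \tau), \mathbf{w}_{-(T\cup\tau)})$ and takes no cost information as input, its output is a deterministic function of the public weights and of $T$, both of which are independent of $d_\tau$. Thus no additional case analysis is required beyond the two cases above, and I expect no real obstacle: the claim reduces to the purely syntactic observation that $d_\tau$ is never read.
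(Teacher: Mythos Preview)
Your proposal is correct and follows essentially the same two-case split as the paper's proof. The paper's argument is terser---it simply notes that when $\tau$ is returned it already receives payment $b$, and when $\mathsf{APX}(\mathcal{M}\setminus(T\cup\tau))$ is returned the outcome is unchanged because $\tau$ remains the max-weight element---whereas you make the underlying reason explicit by tracing through the mechanism line by line to show that $d_\tau$ is never read; this is a slight strengthening of the exposition rather than a different route. One minor wording point: the assumption $c_\tau \le b$ comes from the standing input assumption in Section~\ref{sec:preliminaries} (elements with cost above $b$ are discarded), not from individual rationality.
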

\begin{proof}
	If Mechanism~\ref{alg:matroidsIntersection} returns element $\tau$ when  $\tau$ declares his true cost, then $\tau$ gets a payment of $b$ so that there is no incentive for him to declare other cost. On the other hand, when Mechanism~\ref{alg:matroidsIntersection} returns $\mathsf{APX}(\mathcal{M}\setminus (T\cup \tau))$, declaring a different cost will not change the outcome as it is still the element with the largest weight and $w_\tau < w(\mathsf{APX}(\mathcal{M}\setminus (T\cup \tau)))$.
\qed
\end{proof}

\begin{lemma}
	Assume an element $k$ is in $ \mathsf{APX}(\mathcal{M}\setminus (T \cup \tau))$ when it declares its cost truthfully. Then, element $k$ could not improve his utility by declaring a cost $d_k \neq c_k$.
\end{lemma}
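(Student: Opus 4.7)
The plan is to reuse almost verbatim the argument of Lemma~\ref{lem:truthfulnessInMatching}, which handles the analogous situation for Mechanism~\ref{alg:budgetedMatroids}, together with the cost-independence observation already exploited in Lemma~\ref{lem:truthfulmatroidintersection1} and Lemma~\ref{lem:truthfulness2matoidintersection}. Let $h$ be the iteration at which Mechanism~\ref{alg:matroidsIntersection} terminates when $k$ declares $c_k$, let $T$ be the resulting removed set, and let $r$ be the final per-weight price. Since $k\in \mathsf{APX}(\mathcal{M}\setminus (T\cup\tau))$ we have $r\geq \mathsf{bb}(k)$, so $k$'s truthful utility is $r\cdot w_k-c_k\geq 0$. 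The key invariant I would lean on throughout is that $\mathsf{APX}$ takes only the matroid and the weight vector as input; it never sees the declared costs. Hence, on any fixed reduced matroid, the set returned by $\mathsf{APX}$ is the same no matter what $k$ declares.

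I would then split the analysis into three cases matching Lemma~\ref{lem:truthfulness2matoidintersection}. \textbf{Case 1:} $d_k>c_k$ pushes $k$ to iteration $\ell<h$. Iterations $1,\ldots,\ell-1$ are unaffected, because they only involve elements with $\mathsf{bb}$ strictly above the original $\mathsf{bb}(k)$; at iteration $\ell$ the current matroid and the $\mathsf{APX}$ output coincide with the truthful run's iteration $\ell$, which was infeasible at a price no greater than the new $\mathsf{bb}_{\text{new}}(k)$, so infeasibility persists and $k$ is removed, dropping utility to $0$. \textbf{Case 2:} the manipulated $k$ is still considered at iteration $h$. Then iterations $1,\ldots,h-1$ produce the same $T$, the $\mathsf{APX}$ call at iteration $h$ returns the same set, and the final $r=\min\{b/w(\mathsf{APX}(\mathcal{M}\setminus(T\cup\tau))),\mathsf{bb}(i-1)\}$ depends only on quantities determined by elements of index $<h$, so $r$ and $k$'s utility are unchanged (and if the feasibility check happens to fail under the new price, $k$ is removed, giving utility $0$). \textbf{Case 3:} $d_k<c_k$ demotes $k$ to iteration $>h$. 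The run up to iteration $h-1$ is identical; at iteration $h$ the element newly occupying $k$'s old slot has $\mathsf{bb}$ no larger than the original $\mathsf{bb}(k)$, so the feasibility check passes at least as easily and the mechanism terminates with the same $T$, the same $\mathsf{APX}$ set, and the same $\mathsf{bb}(i-1)$, hence the same $r$; utility is unchanged.

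I do not expect any significant obstacle here: the argument is a mechanical transplant of the earlier lemmas with the bookkeeping modified to account for the fact that $k$'s truthful utility is now $r\cdot w_k-c_k$ rather than zero. The only point that needs a moment's care is verifying in Cases 2 and 3 that the final $r$ does not shift under the manipulation. This reduces to the observation that both $w(\mathsf{APX}(\mathcal{M}\setminus(T\cup\tau)))$ and $\mathsf{bb}(i-1)$ at termination are determined by the weights and by the $\mathsf{bb}$ values of the elements with strictly higher rank than $k$, none of which $k$'s declaration can influence.
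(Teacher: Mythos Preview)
Your proposal is correct and follows essentially the same approach as the paper, which likewise transplants the three-case analysis of Lemma~\ref{lem:truthfulness2matoidintersection} and leans on the fact that $\mathsf{APX}$ ignores declared costs. The only cosmetic difference is that the paper phrases its third case for arbitrary $d_k\neq c_k$ with new rank $>h$ (thus also covering the situation where $d_k>c_k$ but $k$ still lands beyond position $h$), whereas you state it for $d_k<c_k$; the underlying argument is identical.
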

\begin{proof}
	The proof is exactly the same as Lemma~\ref{lem:truthfulness2matoidintersection}.
	Suppose that the mechanism terminates at the $h^{th}$ round  when element $k$ declares its cost truthfully.
	As $k \in \mathsf{APX}(\mathcal{M}\setminus (T \cup \tau))$, we know the payment of element $k$ is $r \cdot w_k$. Hence, his utility is $r \cdot w_k - c_k$. 
	Consider that element $k$ declares a higher cost $d_k > c_k$ and it is considered at the $l^{th}$ iteration where $l < h$. Similar to Lemma~\ref{lem:truthfulness2matoidintersection},  in this case Mechanism~\ref{alg:matroidsIntersection} will not stop before or at the $l^{th}$ iteration since the independent sets computed in $\mathsf{APX}$ are exactly the same as $k$ declaring its cost truthfully and they are not budget feasible. It implies that
	element $k$ will be removed from the matroid and 
	it will never be included in an independent set in Mechanism~\ref{alg:matroidsIntersection}. Therefore, its utility becomes zero. 
	
	Secondly, consider that element $k$ declares a  cost $d_k \neq c_k$ and it is considered at the $h^{th}$ iteration. In this case, Mechanism~\ref{alg:matroidsIntersection} will not stop before the $h^{th}$ iteration because the maximum-value independent sets are not feasible. 
	The  independent in the $h^{th}$ round is the same as $k$ declaring its cost truthfully since the remaining elements are the same. Therefore, if the independent set is budget feasible, the mechanism will compute the same independent set and payments. Otherwise, element $k$ will be removed. Element $k$ cannot benefit in any case. 
	
	Finally, consider that element $k$ declares a  cost $d_k \neq c_k$ and it is considered after  the $h^{th}$ iteration. In this case, Mechanism~\ref{alg:matroidsIntersection}  will compute the same independent set and payments. 
\qed
\end{proof}

\section{Applications}
In this section we briefly discuss some applications of our results. 

\noindent{\bf Uniform Matroid} Additive valuation has been studied in the design of budget feasible mechanisms, e.g.~\cite{singer2010budget,chen2011approximability}. In such settings a buyer would like to maximize his valuation by procuring items under the constraint that his payment is at most his budget. Our result generalizes to the case where the buyer has not only the budget constraint but also has a limit on the number of items he can buy. For example hiring people in companies is not only constraint by budgets but also limited by the office space.  

\noindent{\bf Scheduling Matroid} Our mechanism could be used to purchase processing time in the context of job scheduling. One special case is the following. Each job is associated with a deadline and a profit, and requires a unit of processing time. As jobs may conflict with each other, only one job can be scheduled at the same time. The buyer would like to maximize his profit by completing jobs under the constraint that he does not spend more than his budget in purchasing processing time.  

\noindent{\bf Spectrum Market} Tse and Hanly~\cite{tse1998multiaccess} showed that the set of achievable rates in a Gaussian multiple-access, known as the Cover-Wyner capacity region, forms a polymotroid. It is known there is a pseudopolynomial reduction from polymatroids to matroids~\cite{schrijver2003combinatorial}. Therefore, our mechanism can be used to purchase transmission rates by tele-communication companies.  
\section{XOS functions}\label{sec_XOS functions}
Budget feasible mechanisms received a lot of attention also when the valuations functions are submodular~\cite{singer2010budget} and XOS~\cite{bei2012budget}. In this study, we also slightly improve the analysis of the mechanism proposed in~\cite{bei2012budget}. Specifically, we improve the approximation of the mechanism from $768$ to $436$ by tuning the parameters in the mechanism. 

\begin{theorem}
	There exists a randomized universally truthful mechanisms that provides a $436$-approximation ratio for XOS valuation functions. 
\end{theorem}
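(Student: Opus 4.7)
The plan is to keep the Bei et al.\ mechanism structurally intact and sharpen only its analysis by optimizing the two numerical constants that it contains: the mixing probability $q \in (0,1)$ that interpolates between a ``single best agent'' branch and a ``random-partition threshold'' branch, and the scaling constant $\alpha \in (0,1)$ that converts the value estimate $V$ obtained from one half of the agents into the per-unit-cost threshold used when greedily selecting from the other half. Because only numerical constants are re-tuned, truthfulness, individual rationality and budget feasibility carry over from the original proof unchanged; only the competitive ratio needs to be re-derived.

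First I would re-state the mechanism symbolically so that the roles of $q$ and $\alpha$ are explicit: with probability $1-q$ the mechanism returns the agent $i^\star$ of maximum singleton value and pays him the full budget $b$; with probability $q$ it draws a uniformly random partition $(A_1,A_2)$ of the agents, invokes an XOS demand oracle on $A_1$ to (approximately) obtain $V = v(\mathsf{OPT}(A_1,b))$, and on $A_2$ greedily admits agents in non-increasing order of value-to-cost ratio, while respecting both the budget and a per-agent threshold $v_i \le \alpha V c_i / b$. With this template fixed, the approximation analysis reduces to tracking how three sources of loss depend on $q$ and $\alpha$.

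The three losses I would track are: (i) a factor $2$ from the random partition, justified by an XOS clause-decomposition argument showing that in expectation each of $A_1$ and $A_2$ carries at least half of the value of $\mathsf{OPT}$ witnessed by the optimal clause; (ii) a factor of order $1/(1-\alpha)$ coming from the threshold cap on $A_2$, together with an additive ``last-rejected-agent'' term which is charged to the single-best-agent branch; and (iii) factors $1/q$ and $1/(1-q)$ arising when the expected value of the mechanism is compared branch by branch to $v(\mathsf{OPT})$. Collecting these dependencies yields an explicit function $F(q,\alpha)$ such that $\mathbb{E}[\textrm{ALG}] \ge v(\mathsf{OPT})/F(q,\alpha)$. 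The remaining step is to minimise $F$ over its admissible region; equating the two partial derivatives to zero and evaluating at the interior critical point should give $F(q^\star,\alpha^\star) = 436$, replacing the $768$ obtained from the somewhat arbitrary original choices.

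The main obstacle will be re-certifying the ``last-rejected-agent'' inequality on $A_2$ once $\alpha$ is allowed to move away from its original value: the exchange argument bounding the total value of the greedily rejected agents involves $\alpha$ multiplicatively, and one must check that the residual can still be absorbed by the single-best-agent branch for every $(q,\alpha)$ in the optimization region, not merely for the specific numerical values used by Bei et al. Once this inequality is re-established in parametric form, the minimization of $F(q,\alpha)$ is a routine two-variable calculus exercise, and the $436$ bound follows from a short numerical sanity check; truthfulness and budget feasibility are inherited verbatim from the original mechanism.
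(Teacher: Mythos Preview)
Your high-level plan---keep the Bei et al.\ mechanism and re-tune its free numerical parameters---is exactly what the paper does. But the specifics of your proposal diverge from the actual mechanism and analysis in ways that matter.

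First, the mechanism on the second half $A_2$ is \emph{not} a greedy sweep by value-to-cost ratio. For an XOS valuation a per-agent ``value'' is not even well-defined until you fix a clause. What Bei et al.\ (and the paper) actually do on $T_2$ is solve a demand query $S^\ast \in \arg\max_{S\subseteq T_2}\{v(S)-t\cdot c(S)\}$, take the additive clause $f$ witnessing $v(S^\ast)$, and then run the known additive-valuation budget-feasible mechanism on $(S^\ast,f)$. The factor-$3$ loss of that inner additive mechanism is one of the constants in the final bound; your sketch has no place for it.

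Second, the two parameters the paper optimizes are not the ones you chose. The mixing probability stays fixed at $1/2$. The tunable knobs are (i) an \emph{analysis-only} threshold $\alpha$ used to split into the cases ``some $e$ has $f^\ast(e)>\tfrac{1}{\alpha}f^\ast(\mathrm{OPT})$'' versus ``all elements are $1/\alpha$-small'', and (ii) the threshold-scaling $\beta$ in $t=v(\mathrm{OPT}(T_1))/(\beta b)$. The small-element assumption is what makes the partition lemma work: it is what lets one show that with probability at least $1/2$ \emph{both} $T_1$ and $T_2$ carry at least a $\tfrac{\alpha-1}{4\alpha}$ fraction of $f^\ast(\mathrm{OPT})$. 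Your ``factor $2$ from the random partition, justified by an XOS clause-decomposition'' is only an expectation statement about one side; without the $\alpha$-case split you cannot get the simultaneous two-sided bound that the threshold step needs. The paper's final expression is
\[
\min\Bigl(\tfrac{1}{2\alpha},\ \tfrac{1}{12}\min\bigl(\tfrac{\alpha-1}{8\alpha\beta},\ \tfrac{\alpha\beta-\beta-4\alpha}{4\alpha\beta}\bigr)\Bigr),
\]
and balancing it gives $\alpha\approx 218$, $\beta\approx 4.5$, whence the ratio $2\alpha\approx 436$. Your function $F(q,\alpha)$ with $\alpha\in(0,1)$ is a different object, and there is no reason its minimum should land on $436$.
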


\subsection{Model}
We are given a set $E$ consisting of $n$ elements and a budget $b$. Each element $e \in E$ has a private cost $c_e$. For any subset $S \subseteq E$, there is publicly known valuation function $v(S)$ that indicates the value of $S$. In this section, we are interested in XOS valuation functions. More precisely, a function $v(\cdot)$ is XOS if 
\[
v(S) = \max \{ f_1(S),f_2(S),\ldots,f_m(S)   \}  \hspace{1cm} \textit{ for any } S\subseteq E
\]	
where each $f_k(\cdot)$ is a nonnegative additive function.

Our goal is to design truthful mechanisms that give elements incentives to declare their true private costs. Meanwhile, mechanisms aim to choose a set of elements within the budget and maximize the value of the chosen agents.
Unlike matroids, the mechanism is allowed to select any subset of elements. 
We compare our mechanisms against the optimal mechanism which alway magically knows the private costs of elements. The optimal mechanism returns a set of elements such that the aggregated cost of elements is at most budget $b$ and the value of the elements is maximized.  We compare the value of elements chosen by our mechanisms against the value of elements returned by the optimal mechanism.  

Similar as truthful mechanisms, when a mechanism is randomized, that is, outputting a distribution over a set of outcomes, we call a randomized mechanism  universally truthful if it takes a distribution over deterministic truthful mechanisms.

In the mechanism and its analysis presented in the following sections, we will often consider the optimal solution in a restricted set of elements. Given $S \subseteq E$, let $\mathrm{OPT}(S)$ be the optimal solution when only elements in $S$ are the input of the problem. For example, the optimal solution of the problem is denoted by $\mathrm{OPT}(E)$. We will simply use $\mathrm{OPT}$ to denote the optimal solution of the problem, i.e., $\mathrm{OPT} = \mathrm{OPT}(E)$.
Let $f^*$ be the additive function in the XOS definition of $v(\cdot)$ with $f^*(\mathrm{OPT})=v(\mathrm{OPT})$.

\subsection{Key Lemmas}
\begin{lemma}
	\label{lem:partition}
	Assume that  $f^*(e) \leq \frac{1}{\alpha}f^*(\mathrm{OPT})$   for all $e \in \mathrm{OPT}$, then there exists two disjoint sets $S_1, S_2 \subset E$ such that $v(S_1) \geq \frac{\alpha-1}{2\alpha} f^*(\mathrm{OPT})$ and $v(S_2) \geq \frac{\alpha-1}{2\alpha} f^*(\mathrm{OPT})$.
\end{lemma}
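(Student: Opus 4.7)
The plan is to construct $S_1$ and $S_2$ by partitioning $\mathrm{OPT}$ itself into two disjoint pieces whose $f^*$-masses are nearly balanced, and then invoke the XOS definition to bound $v(\cdot)$ from below by $f^*(\cdot)$.

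First I would order the elements of $\mathrm{OPT}$ arbitrarily as $e_1, e_2, \ldots, e_t$ and consider the prefix sums $\sigma_j = \sum_{i \le j} f^*(e_i)$, where $\sigma_0 = 0$ and $\sigma_t = f^*(\mathrm{OPT})$. Since consecutive prefix sums differ by $f^*(e_j) \le \tfrac{1}{\alpha} f^*(\mathrm{OPT})$ by hypothesis, the values $\sigma_0, \sigma_1, \ldots, \sigma_t$ move from $0$ to $f^*(\mathrm{OPT})$ in jumps of size at most $\tfrac{1}{\alpha} f^*(\mathrm{OPT})$. Therefore there exists an index $j^*$ such that
\[
\frac{\alpha - 1}{2\alpha} f^*(\mathrm{OPT}) \;\le\; \sigma_{j^*} \;\le\; \frac{\alpha + 1}{2\alpha} f^*(\mathrm{OPT}).
\]
Set $S_1 = \{e_1, \ldots, e_{j^*}\}$ and $S_2 = \{e_{j^*+1}, \ldots, e_t\}$; these are disjoint subsets of $E$, and by the choice of $j^*$ both $f^*(S_1) = \sigma_{j^*}$ and $f^*(S_2) = f^*(\mathrm{OPT}) - \sigma_{j^*}$ are at least $\tfrac{\alpha-1}{2\alpha} f^*(\mathrm{OPT})$.

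Finally, since $f^*$ is one of the additive functions witnessing the XOS decomposition of $v(\cdot)$, the XOS definition gives $v(S) \ge f^*(S)$ for every $S \subseteq E$. Applying this to $S_1$ and $S_2$ yields the desired bound $v(S_i) \ge \tfrac{\alpha-1}{2\alpha} f^*(\mathrm{OPT})$ for $i=1,2$.

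The only place where one has to be slightly careful is the existence of the ``balanced cut'' index $j^*$: this is a one-line discrete intermediate-value argument using the fact that each jump is at most $\tfrac{1}{\alpha} f^*(\mathrm{OPT})$, which is precisely the hypothesis on $f^*(e)$. No real obstacle is expected; the lemma is essentially a packing statement combined with the sub-additive lower bound $v \ge f^*$ built into the XOS structure.
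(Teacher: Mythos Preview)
Your proposal is correct and follows essentially the same argument as the paper: both order the elements of $\mathrm{OPT}$ arbitrarily, take the shortest prefix whose $f^*$-mass reaches $\tfrac{\alpha-1}{2\alpha}f^*(\mathrm{OPT})$, use the small-jump hypothesis to cap that prefix at $\tfrac{\alpha+1}{2\alpha}f^*(\mathrm{OPT})$, and then invoke $v\ge f^*$ from the XOS definition. Your prefix-sum / discrete intermediate-value phrasing is slightly more explicit than the paper's ``keep adding until the threshold is crossed'' description, but the content is identical.
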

\begin{proof}
	We give a constructive proof for this lemma. In the next paragraph, we will show a way to construct sets $S_1$ and $S_2$ such that $f^*(S_1) \geq \frac{\alpha-1}{2\alpha} f^*(\mathrm{OPT})$ and $f^*(S_2) \geq \frac{\alpha-1}{2\alpha} f^*(\mathrm{OPT})$. Given that $v(S_1) \geq f^*(S_1)$ and $v(S_2) \geq f^*(S_2)$ implied by the definition of XOS functions, the lemma directly follows. 
	
	Consider an arbitrary order of elements in $\mathrm{OPT}$, we keep adding elements to $S_1$ until that $f^*(S_1) \geq \frac{\alpha-1}{2\alpha} f_{OPT}(\mathrm{OPT})$. Then, the rest of agents are included in $S_2$. By the assume that  $f^*(e) \leq \frac{1}{\alpha}f^*(\mathrm{OPT})$   for all $e \in \mathrm{OPT}$, it implies that $f^*(S_1) \leq \frac{\alpha+1}{2\alpha}f^*(\mathrm{OPT})$. Finally, since $f^*(S_1) + f^*(S_2) = f^*(\mathrm{OPT})$, we have $f^*(S_2) \geq \frac{\alpha-1}{2\alpha} f^*(\mathrm{OPT})$.
\end{proof}

Next, we show that by partitioning elements uniformly random into two groups, we can have a good approximation to the optimal solution at both groups in expectation. The proof shares the same spirit as Lemma~2.1 in~\cite{bei2012budget}.
\begin{lemma}
	\label{lem:randomSampling}
	Assume that $f^*(e) \leq \frac{1}{\alpha}f_{OPT}(\mathrm{OPT})$ for all $e \in \mathrm{OPT}$. Furthermore, suppose that $E$ is divided uniformly at random into two groups $T_1$ and $T_2$. Then, with probability of at least $\frac{1}{2}$, it holds that $v(T_1) \geq \frac{\alpha-1}{4\alpha}  f^*(\mathrm{OPT})$ and $v(T_2)\geq \frac{\alpha-1}{4\alpha} f^*(\mathrm{OPT})$. 
\end{lemma}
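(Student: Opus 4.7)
The plan is to combine Lemma~\ref{lem:partition} with the independence built into the uniform random partition. I would first invoke Lemma~\ref{lem:partition} to obtain disjoint $S_1,S_2\subseteq \mathrm{OPT}$ satisfying $v(S_j)\geq f^*(S_j)\geq \frac{\alpha-1}{2\alpha}f^*(\mathrm{OPT})$ for each $j\in\{1,2\}$. By the XOS definition of $v$, for each $j$ I fix an additive function $g_j$ in the decomposition of $v$ with $g_j(S_j)=v(S_j)$; the two facts I will use repeatedly are that $v(T)\geq g_j(T)$ for every $T\subseteq E$, and that $g_j$ is additive and nonnegative, so $g_j(T)\geq g_j(T\cap S_j)$.

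Next I turn to the random partition. For each $j\in\{1,2\}$, the identity $g_j(S_j\cap T_1)+g_j(S_j\cap T_2)=g_j(S_j)$ implies that one of the two halves has $g_j$-value at least $g_j(S_j)/2$; let $H_j\in\{T_1,T_2\}$ denote that ``heavier'' half, breaking ties uniformly at random. By symmetry of the uniform partition, $\Pr[H_j=T_1]=\Pr[H_j=T_2]=1/2$, and crucially, because $S_1\cap S_2=\emptyset$, the variable $H_1$ depends only on the coin flips for elements of $S_1$ and $H_2$ only on those for elements of $S_2$, so $H_1$ and $H_2$ are independent. Hence the ``crossed'' event $\{H_1\neq H_2\}$ has probability exactly $1/2$.

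On the crossed event, for each $j\in\{1,2\}$ there is a unique $i(j)\in\{1,2\}$ with $H_{i(j)}=T_j$, and the chain of inequalities
\[
v(T_j)\;\geq\;g_{i(j)}\!\left(S_{i(j)}\cap T_j\right)\;=\;g_{i(j)}\!\left(S_{i(j)}\cap H_{i(j)}\right)\;\geq\;\tfrac{1}{2}\,g_{i(j)}(S_{i(j)})\;=\;\tfrac{1}{2}\,v(S_{i(j)})\;\geq\;\tfrac{\alpha-1}{4\alpha}\,f^*(\mathrm{OPT})
\]
holds for both $j=1,2$ simultaneously, giving the claimed bound with probability at least $1/2$. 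I expect the main obstacle to be organizing this symmetry/independence argument cleanly: one must pin down that $H_1$ and $H_2$ are determined by disjoint sets of coin flips (which relies on $S_1\cap S_2=\emptyset$) and handle the tie case properly (which the uniform tie-breaking takes care of). Once this independence structure is in place, the display above finishes the argument immediately.
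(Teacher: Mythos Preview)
Your proposal is correct and follows essentially the same route as the paper: invoke Lemma~\ref{lem:partition} to obtain disjoint $S_1,S_2$, observe that the random splits of $S_1$ and $S_2$ are independent because these sets are disjoint, and conclude that with probability $1/2$ the two ``heavier halves'' land in different $T_i$'s. The paper's proof is terser and works directly with $f^*$ (using that the $S_j$ constructed in Lemma~\ref{lem:partition} actually satisfy $f^*(S_j)\geq\frac{\alpha-1}{2\alpha}f^*(\mathrm{OPT})$), whereas you introduce separate maximizing additive functions $g_j$; this is a harmless extra generality, and your explicit treatment of tie-breaking is a detail the paper glosses over.
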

\begin{proof}
	Let $S_1, S_2$ be two disjoint sets such that $f^*(S_1) \geq \frac{\alpha-1}{2\alpha} f^*(\mathrm{OPT})$ and $f^*(S_2) \geq \frac{\alpha-1}{2\alpha} f^*(\mathrm{OPT})$. Consider $X_1 = S_1 \cap T_1, Y_1 = S_2 \cap T_1, X_2 = S_1 \cap T_2, Y_2 = S_2 \cap T_2$. As partitioning $S_1$ into
	$X_1,Y_1$ and partitioning $S_2$ into $X_2, Y_2$ are independent to each other. Therefore, with probability $\frac{1}{2}$, the most valuable parts of $S_1$ and $S_2$ will get into different sets $T_1$ and $T_2$, respectively. Thus the lemma follows.
\end{proof}

\subsection{Mechanism}

\fbox{\parbox{\textwidth}{\textsc{XOS-MECHANISM-MAIN}$(\alpha, \beta)$:
		\begin{enumerate}
			\item W.p. $\frac{1}{2}$, pick the most value element and pay him $b$. W.p. $\frac{1}{2}$, continue.
			\item Divide elements independently at random with probability $\frac{1}{2}$ into two set $T_1$ and $T_2$. 
			\item Compute an optimal solution $\mathrm{OPT}(T_1)$ for elements in $T_1$ given budget $b$. 
			\item Set a threshold $t = \frac{v(\mathrm{OPT}(T_1))}{\beta \cdot b}$.
			\item Find a set $S^* \subseteq T_2$ such that
			\[
			S^* \in \arg max_{S \subseteq T_2} \{v(S) - t\cdot c(S) \} 
			\]
			where   $c(S) = \sum_{e\in S} c_e$.
			\item Let $f$ be the additive function in the XOS definition of $v(\cdot)$ with $f(S^*) = v(S^*)$.
			\item Run \textsc{ADDITIVE-MECHANISM} for  $f$ with respect to set $S^*$ and budget $b$.
			\item Output the result of \textsc{ADDITIVE-MECHANISM}.
		\end{enumerate}}}
		
		\begin{lemma}[Claim 3.1 in~\cite{bei2012budget}]
			\label{lem:subset}
			For any $S \subseteq S^*, f(S) - t \cdot c(S) \geq 0$. 
		\end{lemma}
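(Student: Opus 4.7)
The plan is to argue by contradiction against the optimality of $S^*$ with respect to the surplus objective $v(\cdot) - t \cdot c(\cdot)$ over subsets of $T_2$. Suppose, for contradiction, that there exists some $S \subseteq S^*$ with $f(S) - t \cdot c(S) < 0$. I will show that then $S^* \setminus S$ would be a strictly better choice than $S^*$, contradicting the definition of $S^*$.

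First I would use the fact that $f$ is an additive function from the XOS representation of $v$, so $f(S^* \setminus S) = f(S^*) - f(S)$, and that $f(S^*) = v(S^*)$ by the choice of $f$. Next, by the definition of XOS (namely $v(A) \ge f_k(A)$ for every additive $f_k$ in the representation, in particular for the $f$ chosen above), we get the lower bound
\begin{equation*}
v(S^* \setminus S) \;\ge\; f(S^* \setminus S) \;=\; v(S^*) - f(S).
\end{equation*}
Combining this with the additivity of the cost $c(\cdot)$, the surplus of $S^* \setminus S$ satisfies
\begin{equation*}
v(S^* \setminus S) - t\cdot c(S^* \setminus S) \;\ge\; \bigl(v(S^*) - t\cdot c(S^*)\bigr) \;-\; \bigl(f(S) - t\cdot c(S)\bigr).
\end{equation*}

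The final step is to invoke the assumption $f(S) - t \cdot c(S) < 0$, which yields
\begin{equation*}
v(S^* \setminus S) - t\cdot c(S^* \setminus S) \;>\; v(S^*) - t\cdot c(S^*).
\end{equation*}
Since $S^* \setminus S \subseteq T_2$, this strictly beats $S^*$ on the maximization objective, contradicting $S^* \in \arg\max_{S\subseteq T_2}\{v(S) - t\cdot c(S)\}$. I do not expect any real obstacle here: the proof is a short two-line chain of inequalities, and the only subtle point is the correct use of the XOS direction $v(S^*\setminus S) \ge f(S^*\setminus S)$ together with the equality $v(S^*) = f(S^*)$ guaranteed by the specific additive function $f$ selected at the step preceding the invocation of this lemma.
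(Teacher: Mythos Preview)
Your proof is correct. Note that the paper does not actually supply its own proof of this lemma; it merely imports the statement as Claim~3.1 from Bei et al.\ and uses it as a black box. Your contradiction argument---remove $S$ from $S^*$, use additivity of $f$ and $c$ together with $v(S^*)=f(S^*)$ and the XOS lower bound $v(S^*\setminus S)\ge f(S^*\setminus S)$ to show the surplus strictly increases---is exactly the standard proof of this fact and is what one finds in the cited source.
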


		\begin{lemma}
			\label{lem:constantApprox}
			\textsc{XOS-MECHANISM-MAIN} has a $436$-approximation ratio.
		\end{lemma}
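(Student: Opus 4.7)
The plan is to follow the structure of Bei et al.~\cite{bei2012budget} but with the tuning parameters $\alpha$ and $\beta$ kept free until the end, and then optimize them. The first step is a case split on whether a single element of $\mathrm{OPT}$ carries a large fraction of its value. Call $\mathrm{OPT}$ \emph{heavy} if some $e\in\mathrm{OPT}$ has $f^*(e)\ge\tfrac{1}{\alpha}v(\mathrm{OPT})$. In the heavy case, the globally most valuable element has value at least $\tfrac{1}{\alpha}v(\mathrm{OPT})$, so Step~1 of the mechanism, which is taken with probability $\tfrac{1}{2}$ and pays exactly that element its price $b$, contributes at least $\tfrac{1}{2\alpha}v(\mathrm{OPT})$ to the expected output value and yields approximation ratio at most $2\alpha$ in this case.

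In the non-heavy case every $e\in\mathrm{OPT}$ satisfies $f^*(e)<\tfrac{1}{\alpha}v(\mathrm{OPT})$, so Lemma~\ref{lem:randomSampling} applies. Conditioned on reaching Step~2 (probability $\tfrac12$), with conditional probability at least $\tfrac12$ the random bipartition satisfies $v(\mathrm{OPT}(T_1)),\,v(\mathrm{OPT}(T_2))\ge\tfrac{\alpha-1}{4\alpha}v(\mathrm{OPT})$, because $\mathrm{OPT}\cap T_i$ is a budget-feasible witness inside $T_i$. Plugging $\mathrm{OPT}(T_2)$ into the defining maximization of $S^*$ and using $c(\mathrm{OPT}(T_2))\le b$ gives
\[
v(S^*)\;\ge\;v(\mathrm{OPT}(T_2))-t\bigl(b-c(S^*)\bigr)\;\ge\;v(\mathrm{OPT}(T_2))-\tfrac{1}{\beta}\,v(\mathrm{OPT}(T_1)).
\]
I would then invoke the known approximation guarantee $\gamma$ of \textsc{ADDITIVE-MECHANISM} on the ground set $S^*$ with the additive valuation $f$ and budget $b$, which returns a budget-feasible subset of $S^*$ whose $f$-value is at least $\tfrac{1}{\gamma}$ times the optimal knapsack value on $(S^*,f,b)$. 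To turn the lower bound on $v(S^*)=f(S^*)$ into a lower bound on that capped knapsack optimum I would use Lemma~\ref{lem:subset} to obtain $f(e)\ge t\,c(e)$ for every $e\in S^*$, and then a standard greedy-plus-single-element knapsack argument to show that even when $c(S^*)>b$ one can still extract an $f$-value that is a constant fraction of $v(S^*)$.

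Multiplying the probability $\tfrac{1}{4}$ of reaching this favourable Step-2 branch by the guarantees above yields an expected-value bound of the form $\tfrac{1}{c(\alpha,\beta,\gamma)}v(\mathrm{OPT})$ in the non-heavy case, and the overall approximation ratio is the maximum of $2\alpha$ and $c(\alpha,\beta,\gamma)$. The main obstacle, and the only real novelty relative to~\cite{bei2012budget}, is to tune $\alpha$ and $\beta$ so that the two bounds balance: carrying through the optimisation carefully (after substituting the best known $\gamma$ for the additive budget-feasible mechanism and simplifying each inequality chain) drops the overall constant from $768$ to $436$. I expect the final step to reduce to a straightforward low-dimensional minimisation once $\beta$ is eliminated in closed form in terms of $\alpha$.
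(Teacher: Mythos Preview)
Your plan follows the paper's argument closely: the heavy/non-heavy split, the appeal to Lemma~\ref{lem:randomSampling} on the random bipartition, the use of Lemma~\ref{lem:subset} for the density bound $f(e)\ge t\,c(e)$ on $S^*$, and the final call to \textsc{ADDITIVE-MECHANISM} are exactly what the paper does. However, there is one genuine slip in your treatment of the case $c(S^*)>b$.

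You claim that from $f(e)\ge t\,c(e)$ one can ``extract an $f$-value that is a constant fraction of $v(S^*)$''. This is false in general: if $c(S^*)$ is an arbitrarily large multiple of $b$ while every element has $f(e)=t\,c(e)$, then $f(S^*)=t\,c(S^*)$ can be arbitrarily larger than anything budget-feasible, whose $f$-value is at most $t b$. What \emph{does} follow from the density bound is that some $S'\subseteq S^*$ with $\tfrac{b}{2}\le c(S')\le b$ satisfies $f(S')\ge t\cdot\tfrac{b}{2}=\dfrac{v(\mathrm{OPT}(T_1))}{2\beta}$. This lower bound depends on the $T_1$ side of Lemma~\ref{lem:randomSampling}, not on $v(S^*)$ or on $T_2$. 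The paper therefore keeps the two sub-cases separate: $c(S^*)>b$ gives $f(\mathrm{OPT}(S^*))\ge\dfrac{\alpha-1}{8\alpha\beta}\,v(\mathrm{OPT})$, while $c(S^*)\le b$ gives (via the comparison with $\mathrm{OPT}\cap T_2$ that you wrote) $f(\mathrm{OPT}(S^*))\ge\dfrac{\alpha\beta-\beta-4\alpha}{4\alpha\beta}\,v(\mathrm{OPT})$. Both constants, together with $2\alpha$ from the heavy case, enter the final optimisation, and it is precisely the balance of these \emph{three} expressions (with $\gamma=3$ for \textsc{ADDITIVE-MECHANISM}, not ``the best known $\gamma$'' --- the subroutine is fixed by the mechanism) that yields $\alpha\approx 218$, $\beta\approx 4.5$, and the ratio $436$. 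Your unified bound on $v(S^*)$ is fine for the $c(S^*)\le b$ branch but cannot replace the direct $tb/2$ argument in the other branch.
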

		\begin{proof}
			We prove this lemma by considering difference cases. First, assuming that there exists an element $e$ such that $f^*(e) > \frac{1}{\alpha} f^*(\mathrm{OPT})$, \textsc{XOS-RANDOM-SAMPLE} has a probability of $\frac{1}{2}$ to return the most value agent. Hence, \textsc{XOS-RANDOM-SAMPLE} is $2\alpha$-approximate the optimal in this case. 
			
			Second, we consider the case that $f^*(e) \leq \frac{1}{\alpha} f^*(\mathrm{OPT})$ for all $e \in \mathrm{OPT}$. The main idea is to show that there exists a $S' \subseteq S^*$ such that  $c(S')$ is at most $b$ and $f(S')$ is a good approximation to $f^*(\mathrm{OPT})$. Let us divide this case into two sub-cases.
			\begin{itemize}
				\item $c(S^*)>b$. In this case, since $c(e) \leq b$ for all $e \in E$, we can always find a subset $S' \subset S^*$ such that $\frac{b}{2} \leq c(S') \leq b$.  By Lemma~\ref{lem:subset}, we know $f(S') \geq t \cdot c(S') \geq \frac{v(\mathrm{OPT}(T_1))}{\beta \cdot b} \cdot \frac{b}{2} \geq \frac{v(\mathrm{OPT}(T_1))}{2\beta}$. As $f(\mathrm{OPT}(S^*))$ is at least $f(S')$, we have $f(\mathrm{OPT}(S^*)) \geq f(S') \geq  \frac{v(\mathrm{OPT}(T_1))}{2\beta} \geq \frac{\alpha-1}{8\cdot\alpha \cdot \beta}f^*(\mathrm{OPT})$ with a probability of at least $\frac{1}{2}$. 
				\item $c(S^*)\leq b$. Then $\mathrm{OPT}(S^*) = S^*$. Let $S' = \mathrm{OPT} \setminus T_1$, thus, $c(S') \leq c(\mathrm{OPT}) \leq b$. By Lemma~\ref{lem:randomSampling}, we have $v(S') \geq \frac{\alpha-1}{4\alpha}f^*(\mathrm{OPT})$ with a probability of at least $\frac{1}{2}$. Since $S^* \in \arg max_{S \subseteq T_2} \{v(S) - t\cdot c(S) \}$, with a probability of at least $\frac{1}{2}$, we have
				\begin{align*}
				f(\mathrm{OPT}(S^*)) = f(S^*) & = v(S^*) 
				\\ & \geq v(S^*) - t \cdot c(S^*) 
				\\ & \geq v(S') - t \cdot c(S') 
				\\ & \geq \frac{\alpha-1}{4\alpha} f^*(\mathrm{OPT}) -  \frac{v(\mathrm{OPT}(T_1))}{\beta \cdot b} \cdot b 
				\\ & \geq \frac{\alpha-1}{4\alpha} f^*(\mathrm{OPT}) -  \frac{f^*(\mathrm{OPT})}{\beta} 
				\\ &  = \frac{\alpha \cdot \beta - \beta - 4\alpha}{4 \cdot \alpha \cdot \beta}f^*(\mathrm{OPT})
				\end{align*}
			\end{itemize}
			In both cases, we run \textsc{ADDITIVE-MECHANISM} which has an approximation factor of $3$ to $f(\mathrm{OPT}(S^*))$. Therefore, the approximation ratio for the case that $f^*(e) \leq \frac{1}{\alpha} f^*(\mathrm{OPT})$ for all $e \in \mathrm{OPT}$ is 
			\[
			\frac{1}{2} \cdot \frac{1}{3} \cdot \frac{1}{2} \min (\frac{\alpha-1}{8\cdot\alpha \cdot \beta},   \frac{\alpha \cdot \beta - \beta - 4\alpha}{4 \cdot \alpha \cdot \beta}  )
			\]
			To combine with the first case, we conclude that the approximation of the mechanism is 
			\[
			\min \big(\frac{1}{2\alpha}, \frac{1}{2} \cdot \frac{1}{3} \cdot \frac{1}{2} \min (\frac{\alpha-1}{8\cdot\alpha \cdot \beta},   \frac{\alpha \cdot \beta - \beta - 4\alpha}{4 \cdot \alpha \cdot \beta}  ) \big)
			\]
			By setting $\alpha \approx 218$ and $\beta \approx 4.5$, we get the approximation of $436$. 
		\end{proof}
		\begin{lemma}[Lemma 3.1 in~\cite{bei2012budget}]
			\label{lem:unversiallytruthful}
			\textsc{XOS-MECHANISM-MAIN} is universally truthful .
		\end{lemma}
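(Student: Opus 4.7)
The plan is to exhibit XOS-MECHANISM-MAIN as a distribution over deterministic truthful mechanisms by conditioning on all of its internal random choices: the initial coin flip that selects between the two branches and, within the second branch, the random partition of $E$ into $(T_1,T_2)$. Universal truthfulness then reduces to checking truthfulness of each resulting deterministic rule. The first branch is immediate: picking the most valuable element and paying it $b$ depends only on the publicly known valuation $v(\cdot)$, so no element can influence the allocation through its cost report; the winner's payment $b$ dominates its true cost $c_e \le b$, and every loser has utility $0$.

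For the second branch with a fixed partition $(T_1,T_2)$, I would handle the two sides separately. Every element $e \in T_1$ is used only in computing $\mathrm{OPT}(T_1)$ and the threshold $t = v(\mathrm{OPT}(T_1))/(\beta b)$; it never enters $S^* \subseteq T_2$ nor the final output, so its utility is identically $0$ and truthful reporting is (weakly) optimal. For an element $e \in T_2$, the threshold $t$ depends only on $T_1$'s reports and is therefore a fixed constant from $e$'s perspective. What remains is to argue truthfulness of the composite rule that first selects $S^* = \arg\max_{S \subseteq T_2}\{v(S) - t \cdot c(S)\}$ and then runs \textsc{Additive-Mechanism} on $(S^*, f)$ with budget $b$, where $f$ is the additive XOS-representative of $v$ at $S^*$.

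This composite step is the main obstacle, because $e$'s declaration enters the $\arg\max$ through the term $t \cdot c(S)$ and can shift $S^*$ itself, so the truthfulness of \textsc{Additive-Mechanism} on a fixed ground set does not apply off the shelf. My plan is to split on whether $e \in S^*$ under truthful reporting and then bound the utility obtainable from every possible deviation. If $e \notin S^*$ truthfully, the $\arg\max$ definition of $S^*$ gives the marginal inequality $v(S^* \cup \{e\}) - v(S^*) \le t \cdot c_e$; combining this with the XOS bound $f(\{e\}) \le v(\{e\})$ and Lemma~\ref{lem:subset}, I would argue that no downward deviation $d_e < c_e$ that drags $e$ into a modified $S^*$ can generate an \textsc{Additive-Mechanism} payment above $c_e$. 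If $e \in S^*$ truthfully, any deviation either keeps $e$ inside an (unchanged or shrunken) $S^*$ that still contains it, in which case truthfulness of \textsc{Additive-Mechanism} on its fixed ground set closes the case, or ejects $e$ from $S^*$ altogether, yielding utility $0$. Packaging these subcases carefully, essentially along the lines of Lemma~3.1 in~\cite{bei2012budget}, lets the threshold $t$ act as a posted inclusion price for $S^*$ while \textsc{Additive-Mechanism} supplies truthful pricing within the chosen set, completing the argument.
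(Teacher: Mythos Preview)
The paper itself offers no proof of this lemma; it simply imports Lemma~3.1 from \cite{bei2012budget} as a black box. Your proposal goes further by sketching the argument, so there is nothing in the paper to compare against beyond noting that both of you ultimately defer to \cite{bei2012budget} for the substantive step.

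Your decomposition by conditioning on the coin flip and the random partition is the right framework, and the first branch and the $T_1$-side are handled correctly. For $T_2$-elements, your key structural observation in the $e\in S^*$ case is also correct: if $e$ deviates and still lies in the new maximizer, then the maximizer is literally unchanged, because every subset containing $e$ shifts by the same amount $t(c_e-d_e)$ while subsets not containing $e$ are fixed. (Your phrase ``unchanged or shrunken'' is imprecise; ``unchanged'' is the accurate statement.) This is exactly what reduces that case to truthfulness of \textsc{Additive-Mechanism} on a fixed instance.

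The gap is in the $e\notin S^*$ case. Your marginal inequality $v(S^*\cup\{e\})-v(S^*)\le t\cdot c_e$ is valid, but after a downward deviation the new maximizer $S^{*'}$ is the argmax over \emph{all} subsets of $T_2$ containing $e$, which need not be $S^*\cup\{e\}$; and the additive representative $f$ used in step~6 is taken at $S^{*'}$, not at $S^*\cup\{e\}$. Consequently neither the marginal inequality nor the XOS bound $f(\{e\})\le v(\{e\})$ gives you any direct control over $f(e)$ or over the \textsc{Additive-Mechanism} payment to $e$. Concretely, one can arrange $S^{*'}$ so that $f(e)$ and the \textsc{Additive-Mechanism} threshold on $S^{*'}$ are both far above $c_e$, even though the $S^*$-membership threshold for $e$ lies below $c_e$; your listed inequalities do not rule this out. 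Closing this case requires the additional argument from \cite{bei2012budget} that ties the payment produced by \textsc{Additive-Mechanism} on $S^{*'}$ to the $S^*$-membership threshold, and that coupling is precisely the nontrivial content of their Lemma~3.1 that your sketch does not reproduce.
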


\section{Acknowledgement}
This work was partially supported by the ERC Advanced Grant 788893 AMDROMA ``Algorithmic and Mechanism Design Research in Online Markets" and MIUR PRIN project ALGADIMAR ``Algorithms, Games, and Digital Markets".

\bibliographystyle{plain}
\bibliography{ref}

\begin{thebibliography}{10}

\bibitem{anari2014mechanism}
Nima Anari, Geetika Goel, and Afshin Nikzad.
\newblock Mechanism design for crowdsourcing: An optimal 1-1/e competitive
  budget-feasible mechanism for large markets.
\newblock In {\em 55th Annual IEEE Symposium on Foundations of Computer Science
  (FOCS)}, pages 266--275. IEEE, 2014.

\bibitem{ausubel2004efficient}
Lawrence~M Ausubel.
\newblock An efficient ascending-bid auction for multiple objects.
\newblock {\em American Economic Review}, pages 1452--1475, 2004.

\bibitem{bei2012budget}
Xiaohui Bei, Ning Chen, Nick Gravin, and Pinyan Lu.
\newblock Budget feasible mechanism design: from prior-free to bayesian.
\newblock In {\em Proceedings of the forty-fourth Annual ACM Symposium on
  Theory of Computing (STOC)}, pages 449--458. ACM, 2012.

\bibitem{bikhchandani2011ascending}
Sushil Bikhchandani, Sven de~Vries, James Schummer, and Rakesh~V Vohra.
\newblock An ascending vickrey auction for selling bases of a matroid.
\newblock {\em Operations research}, 59(2):400--413, 2011.

\bibitem{chan2014truthful}
Hau Chan and Jing Chen.
\newblock Truthful multi-unit procurements with budgets.
\newblock In {\em the proceedings of the 10th International Conference on Web
  and Internet Economics (WINE)}, pages 89--105, 2014.

\bibitem{chen2011approximability}
Ning Chen, Nick Gravin, and Pinyan Lu.
\newblock On the approximability of budget feasible mechanisms.
\newblock In {\em Proceedings of the Twenty-second Annual ACM-SIAM Symposium on
  Discrete Algorithms (SODA)}, pages 685--699. SIAM, 2011.

\bibitem{clarke1971multipart}
E.H. Clarke.
\newblock {Multipart pricing of public goods}.
\newblock {\em Public choice}, 11(1):17--33, 1971.

\bibitem{demange1986multi}
Gabrielle Demange, David Gale, and Marilda Sotomayor.
\newblock Multi-item auctions.
\newblock {\em The Journal of Political Economy}, pages 863--872, 1986.

\bibitem{goel2015polyhedral}
Gagan Goel, Vahab Mirrokni, and Renato~Paes Leme.
\newblock Polyhedral clinching auctions and the adwords polytope.
\newblock {\em Journal of the ACM (JACM)}, 62(3):18, 2015.

\bibitem{goel2014allocating}
Gagan Goel, Afshin Nikzad, and Adish Singla.
\newblock Allocating tasks to workers with matching constraints: truthful
  mechanisms for crowdsourcing markets.
\newblock In {\em Proceedings of the companion publication of the 23rd
  international conference on World Wide Web companion}, pages 279--280, 2014.

\bibitem{groves1973incentives}
T.~Groves.
\newblock {Incentives in teams}.
\newblock {\em Econometrica: Journal of the Econometric Society}, pages
  617--631, 1973.

\bibitem{karlin2005beyond}
Anna~R Karlin and David Kempe.
\newblock Beyond vcg: Frugality of truthful mechanisms.
\newblock In {\em 46th Annual IEEE Symposium on Foundations of Computer Science
  (FOCS)}, pages 615--624. IEEE, 2005.

\bibitem{kleinberg2012matroid}
Robert Kleinberg and Seth~Matthew Weinberg.
\newblock Matroid prophet inequalities.
\newblock In {\em Proceedings of the forty-fourth annual ACM symposium on
  Theory of computing}, pages 123--136. ACM, 2012.

\bibitem{zhang2016house}
Piotr Krysta and Jinshan Zhang.
\newblock House markets with matroid and knapsack constraints.
\newblock In {\em Proceedings of The 43rd International Colloquium on Automata,
  Languages and Programming (ICALP)}, 2016.

\bibitem{schrijver2003combinatorial}
Alexander Schrijver.
\newblock {\em Combinatorial Optimization. Algorithms and Combinatorics, vol.
  24}.
\newblock Springer, Berlin, 2003.

\bibitem{singer2010budget}
Yaron Singer.
\newblock Budget feasible mechanisms.
\newblock In {\em 51st Annual IEEE Symposium on Foundations of Computer Science
  (FOCS)}, pages 765--774, 2010.

\bibitem{tse1998multiaccess}
David~NC Tse and Stephen~V Hanly.
\newblock Multiaccess fading channels. i. polymatroid structure, optimal
  resource allocation and throughput capacities.
\newblock {\em IEEE Transactions on Information Theory}, 44(7):2796--2815,
  1998.

\bibitem{vickrey1961counterspeculation}
W.~Vickrey.
\newblock {Counterspeculation, auctions, and competitive sealed tenders}.
\newblock {\em The Journal of Finance}, 16(1):8--37, 1961.

\end{thebibliography}

\end{document}